\newtheorem{theorem}{Theorem}[section]
\newtheorem{corollary}[theorem]{Corollary}
\newtheorem{lemma}[theorem]{Lemma}
\newtheorem{remark}[theorem]{Remark}
\newtheorem{proposition}[theorem]{Proposition}
\theoremstyle{definition}
\newtheorem{example}[theorem]{Example}
\providecommand{\keywords}[1]{\textbf{Keywords.} #1}
\title{Negative Prices in Network Pricing Games}
\author[1]{Andr\'{e}s~Cristi}
\author[2]{Marc~Schr\"oder}
\affil[1]{Universidad de Chile, Santiago, Chile, andres.cristi@ing.uchile.cl}
\affil[2]{RWTH Aachen University, Aachen, Germany, marc.schroeder@oms.rwth-aachen.de}
\date{}
\begin{document}
\maketitle

\begin{abstract}
	In a Stackelberg network pricing game, a leader sets prices for a given subset of edges so as to maximize profit, after which one or multiple followers choose a shortest path from their source to sink. We study the counter-intuitive phenomenon that the use of negative prices by the leader may in fact increase his profit. In doing so, we answer the following two questions. First, how much more profit can the leader earn by setting negative prices? Second, for which network topologies can the profit be increased by using negative prices? Our main result shows that the profit with negative prices can be a factor $\Theta(\log (m\cdot\bar k))$ larger than the maximum profit with positive prices, where $m$ is the number of priceable edges in the graph and $\bar k \leq 2^m$ the number of followers. In particular, this factor cannot be bounded for the single follower case, and can even grow linearly in $m$ if the number of followers is large. Our second result shows that series-parallel graphs with a single follower and Stackelberg games with matroid followers are immune to the negative price paradox.
\end{abstract}

\keywords{Stackelberg games, pricing, negative prices, road tolling.}

\section{Introduction}

Suppose that you are planning to buy an intercontinental flight. It is not uncommon to find that a direct flight between two main hubs of a carrier is more expensive than taking an extra leg to a smaller hub. So adding an extra flight to your itinerary, might in fact decrease your costs. One way of explaining this behavior of flight carriers is the following. Suppose there are two classes of customers: businessmen who are willing to pay a high price for the direct flight, and tourists who are willing to pay a lower price, but need to travel longer to reach their destination. In order to extract both types' surplus, it is optimal to set a higher price for the direct flight and a lower price for the longer connecting flight.

The lower price for a connecting flight is an example of bundle pricing: the price of two connecting flights is cheaper than buying both flights separately. In other words, firms are willing to put a negative price on a good so as to sell multiple goods instead of only one. The example above shows that this can be beneficial when having multiple customers, but a similar result applies when having just a single customer. With a single customer however, the question whether bundle pricing is better than single item pricing depends on the set of feasible bundles of the customer. Whenever goods are either perfect complements or perfect substitutes, bundle pricing will not improve your profit. The case where customers buy paths from an origin to a destination is of special interest, as it is a simplified model of many transportation markets.

We consider the class of Stackelberg network pricing games to model the above situation. These models are typically used for road tolling problems and were first introduced by Labb\'{e} et al.~\cite{LaMaSa98}. In Stackelberg network pricing games, a leader moves first by setting prices on edges he owns, after which each follower decides on a path of minimum cost between her source and sink. The objective of the leader is to maximize profit. It is therefore a natural and common assumption in the literature to assume that prices are non-negative, see, e.g., \cite{BiGuPrWi08,BrChKhLaNa10,BrHoKr12,Jo11,Ho08}. However, Labb\'{e} et al.~\cite{LaMaSa98} gave an example of a Stackelberg single-follower shortest-path pricing game in which the profit is maximized by using negative prices. We call this phenomenon the \textit{negative price paradox}. The use of negative prices in that example can be thought of as a bundle pricing strategy of the leader that guarantees that the follower uses multiple edges owned by the leader. The main question we want to answer is how much more profit can the leader earn by using such bundle pricing strategies. Since calculating optimal bundle prices is rather intractable as a price has to be determined for each of the exponentially many bundles of resources, we restrict our analysis to single item pricing. Thus, we model bundle pricing by allowing for negative prices of the leader.

For multiple followers, it might not be surprising that the profit can be arbitrarily larger with negative prices compared to restricting to positive prices. However, our main result shows that the same is true for a single follower. So even though the assumption of non-negative prices is natural, it clearly has an impact on the maximum profit the leader can achieve. Our contribution is two-fold. From a practical perspective, we show that a very simple bundle pricing mechanism, using positive and negative single item prices, can achieve arbitrarily higher profits than single item pricing with only positive prices. From a theoretical perspective, we show that a seemingly innocent assumption, non-negative prices, might have a big impact on the outcome of the game.

\subsection{Contribution}
We start by studying Stackelberg network pricing games in which the followers choose a shortest path from source to sink. Our main goal is to quantify the loss in profit due to assuming non-negative prices. For this purpose, we define the \textit{Price of Positivity} (PoP), which is the ratio between the profit of the leader when he is allowed to use negative prices and when he is not. Theorem \ref{thm:main} proves that the PoP can be of order $\Theta(\log m\cdot\bar k)$, where $m$ is the number of priceable edges and $\bar k \leq 2^m$ the number of followers. We prove that this bound is asymptotically tight by means of two different classes of instances. First, Theorem \ref{thm:inf} shows that the price of positivity can be arbitrarily large even with a single follower. To prove this, we use the class of generalized Braess graphs \cite{Ro06}. Second, Theorem \ref{thm:mul} shows that the price of positivity can be arbitrarily large in path graphs given that there are sufficiently many followers.

Then we turn to the question of which network topologies are immune to the negative price paradox. A network is immune to the paradox if for all instances within that network the negative price paradox cannot occur. Theorem \ref{thm:sur} proves that in series-parallel graphs with a single follower, the leader can extract all surplus from the follower by means of positive prices. We then show that in fact, with a single follower, series-parallel graphs are exactly the class of networks that are immune to the negative price paradox. Moreover, by a result of Labb\'{e} et al.~\cite{LaMaSa98}, this also implies that Stackelberg network pricing games in series-parallel graphs with a single follower are polynomial-time solvable.

We then consider a generalization of Stackelberg network pricing games. We consider the setting in which followers, instead of an  $s$-$t$ path, choose a basis of a given matroid, for instance, a spanning tree in a graph. We prove that in this setting the negative price paradox cannot occur. Lastly, we consider clutters, i.e., set systems that only contain inclusion-wise minimal sets. We derive a necessary and a closely related sufficient condition for clutters to admit the negative price paradox, but a complete characterization remains open.

\subsection{Related literature}
Stackelberg competition was first introduced by Von Stackelberg~\cite{St34} and is now commonly used to describe leader-follower models. Stackelberg network pricing games gained attention due to Labb\'{e} et al.~\cite{LaMaSa98}, who used the game to model road tolling problems. They showed that the problem of finding the optimal prices for the leader when there is only one follower that chooses a shortest path is NP-hard when prices have lower bounds. Roch et al.~\cite{RoSaMa05} proved the more general result that the problem is also NP-hard when prices are unrestricted. Joret~\cite{Jo11} showed that the problem is even APX-hard and Briest et al.~\cite{BrChKhLaNa10} gave a first approximation threshold, namely, the problem cannot be approximated within a factor of $2-o(1)$. For a more detailed survey on this problem, see van Hoesel~\cite{Ho08}. Recently, also different combinatorial problems were studied in a Stackelberg setting. For example, Cardinal et al.~\cite{CaDeFiJoLaNeWe11} proved that the Stackelberg single-follower minimum spanning tree problem is APX-hard. Bil\`{o} el al.~\cite{BiGuPrWi08} and later Cabello~\cite{Ca12} studied Stackelberg shortest path tree games.

Balcan et al.~\cite{BaBlMa08} and Briest et al.~\cite{BrHoKr12} considered single-price strategies. They both show independently that this very simple pricing strategy provides a logarithmic approximation algorithm. B\"{o}hnlein et al.~\cite{BoKrSc17} extended the analysis of this simple algorithm beyond the combinatorial setting.

One more common application of pricing in road tolling problems is to restore inefficiency. Beckmann et al.~\cite{BeMcWi56} and Dafermos and Sparrow~\cite{DaSp69} showed that marginal tolls induce efficient flows when considering the model with congestion effects introduced by Wardrop~\cite{Wa52}. Optimal tolls also exist when users are heterogeneous with respect to
the trade-off between time and money. See, e.g., \cite{CoDoRo03,FlJaMa04,KaKo04,YaHu04}. Hearn and Ramana~\cite{HeRa98} and later Harks et al.~\cite{HaScSi08} considered the problem of characterizing the set of all optimal tolls and used this polyhedron to propose secondary optimization problems. Recently, Basu et al.~\cite{BaLiNi15} improved some of the complexity results for one of these problems called the minimum tollbooth problem.

Efficiency can also be obtained when tolls are imposed by selfish leaders, i.e., leaders that want to maximize their profit as in Stackelberg pricing games. Acemoglu and Ozdaglar~\cite{AcOz04} considered parallel networks in which the leader owns all edges and followers have a fixed reservation value for travel. Their main result is that the monopolist extracts all surplus from the followers and induces the optimal flow. Huang et al.~\cite{HuOzAc06} showed that this result can be generalized to other topologies. Recently, more attention has been addressed to the problem in which multiple leaders compete for followers and thereby induce inefficient flows. Acemoglu and Ozdaglar~\cite{AcOz07} introduced the model with parallel networks and inelastic users. This was later generalized by Ozdagler~\cite{Oz08} and Hayrapetyan et al.~\cite{HaTaWe07} to elastic users, and by Johari et al.~\cite{JaWeRo10} to firms that have to make entry and investment decisions. Correa et al.~\cite{CoGuLiNiSc18} and Harks et al.~\cite{HaScVe18} considered the extension in which a central authority sets price caps for the leaders so as to minimize inefficiencies.

A seemingly related paradox is Braess's paradox \cite{Br68}. It describes the phenomenon in which the increase of resources, like building a new road in a network, may in fact lead to larger costs for the users. Milchtaich~\cite{Mi06} derived a characterization that shows that for undirected single-commodity networks, series-parallel graphs are the largest class of graphs for which Braess's paradox does not occur. This result has been generalized by Chen et al.~\cite{ChDiHu15} and Cenciarelli et al.~\cite{CeGoSa16} to directed graphs and multi-commodity instances. Roughgarden \cite{Ro06} investigated how to improve the performance of a network when it is allowed to remove edges. Fujishige et al.~\cite{FuGoHaPeZe17} characterized the presence of Braess's paradox for clutters. Namely, matroids are exactly those clutters for which Braess's paradox does not occur.

In the mechanism design literature, the performance of selling optimal bundles is also studied. There, it is well-known that revenue maximization with more than one good is a difficult problem. For some results on the performance of selling optimal bundles, see, e.g., Li and Yao~\cite{LiYa13}, and Hart and Nissan~\cite{HaNi17}, and the references therein.

\section{Model}
A Stackelberg network pricing game is given by a tuple $\mathcal{M}=(G,(c_e)_{e\in E},E_p,K,(s^k,t^k,R^k)_{k\in K})$, where $G=(V,E)$ is a directed multigraph, $c_e\in\mathbb{R}_+$ is the fixed cost of edge $e\in E$, $E_p\subseteq E$ is the set of \textit{priceable edges}, $K=\{1,\ldots,\bar k\}$ is the set of followers, $(s^k, t^k)\in V\times V$ is the source-sink pair and $R^k\in\mathbb{R}_+$ is the reservation value for each $k\in K$. Let $m=|E_p|$ and for each $k\in K$, let $\mathcal{P}^k$ denote the set of $s^k,t^k$-paths.

A Stackelberg network pricing game contains two types of players: one leader, and one or more followers. For each priceable edge $e\in E_p$, the leader can specify a price $p_e\in\mathbb{R}$. Let $p=(p_e)_{e\in E_p}$ denote a vector of prices. Given a vector of prices $p$, the total costs of a path $P\in\mathcal{P}^k$ for follower $k\in K$ are defined by
\[c_P=\sum_{e\in P}c_e+\sum_{e\in P\cap E_p}p_e.\]
For each $p\in\mathbb{R}^{E_p}$, each follower chooses a path $P^k(p)\in\mathcal{P}^k$ with $c_{P^k(p)}\leq R^k$ so as to minimize total costs, and if no such path exists, chooses $P^k(p)=\emptyset$. 
Think of $R^k$ as an unpriceable edge for follower $k\in K$ from source to sink with a fixed cost of $R^k$.  

For each $p\in\mathbb{R}^{E_p}$, the profit of the leader is equal to
\[\pi(p)=\sum_{k\in K}\sum_{e\in P^k(p)\cap E_p} p_e.\]
We assume that the leader chooses $p$ so as to maximize profit.

We make the following two assumptions. First, we assume that if there are ties in the lower level, the follower chooses the most profitable path for the leader. This is to make sure that we do not have to deal with arbitrarily small epsilons. Second, we assume that the graph is irredundant, i.e., each edge is contained in at least one $s^k$-$t^k$ path for some $k\in K$. Edges that are on no such path are not relevant for our problem and can be deleted.

We call a price vector $p\in\mathbb{R}^{E_p}$ \textit{optimal} if for all $p'\in\mathbb{R}^{E_p}$, $\pi(p)\geq\pi(p').$
We denote an optimal solution by $p^*$. We call a price vector $p\in\mathbb{R}_+^{E_p}$ \textit{optimal for non-negative prices} if for all $p'\in\mathbb{R}_+^{E_p}$, $\pi(p)\geq\pi(p').$
We denote an optimal solution for non-negative prices by $p_+^*$. For a given model $\mathcal{M}$, we define the \textit{price of positivity} by
\[PoP(\mathcal{M})=\frac{\pi(p^*)}{\pi(p_+^*)}.\]

Labb\'{e} et al.~\cite{LaMaSa98} already observed that an optimal solution for the leader may involve negative prices. We study the question of how much profit a leader can loose by restricting prices to be positive.

\section{Optimal Profit}

In the following section, we will prove our main result. We characterize the loss in profit due to assuming that prices are non-negative.

\begin{theorem} \label{thm:main}
  Let $\mathcal{F}(m,\bar k)$ be the family of Stackelberg network pricing games with $|E_p|=m$ and
  $|K|=\bar{k}\leq 2^m$. Then $\sup_{\mathcal{M}\in \mathcal{F}(m,\bar k)} PoP(\mathcal{M})=\Theta\left(\log(m\cdot\bar k)\right)$.
  \label{th:main_theorem}
\end{theorem}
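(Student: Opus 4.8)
The statement asserts a matching upper and lower bound, so the plan is to prove $\sup_{\mathcal M}PoP(\mathcal M)=O(\log(m\bar k))$ over the whole family and, separately, to exhibit instances with $PoP=\Omega(\log(m\bar k))$. For the upper bound I would show that a single \emph{uniform} non-negative price already recovers an $\Omega(1/\log(m\bar k))$ fraction of the unrestricted optimum, which lower-bounds $\pi(p_+^*)$. Fix an optimal (possibly negative) price vector $p^*$ and write $\pi(p^*)=\sum_{k}q_k$, where $q_k=\sum_{e\in P^k(p^*)\cap E_p}p^*_e$ is the amount paid by follower $k$. For $\delta\ge 0$ let $g_k(\delta)=\min\big(R^k,\ \min_{P\in\mathcal P^k}\big[\sum_{e\in P}c_e+\delta\,|P\cap E_p|\big]\big)$ be the cost incurred by follower $k$ when every priceable edge is charged the common price $\delta$. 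Each $g_k$ is concave, non-decreasing and piecewise linear, and its right derivative $r_k(\delta)=|P\cap E_p|$, the number of priceable edges used at price $\delta$, is integer-valued and non-increasing. The first step is the inequality $q_k\le g_k(\infty)-g_k(0)=:v_k$: since $P^k(p^*)$ is at least as cheap as the follower's outside option we have $\sum_{e\in P^k(p^*)}c_e+q_k\le g_k(\infty)$, while $\sum_{e\in P^k(p^*)}c_e\ge g_k(0)$. Summing over $k$ gives $\pi(p^*)\le\sum_k v_k$.

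The heart of the argument is to rewrite $\sum_k v_k=\sum_k\int_0^\infty r_k(\delta)\,d\delta=\int_0^\infty N(\delta)\,d\delta$, where $N(\delta)=\sum_k r_k(\delta)$ is non-increasing, integer-valued, and bounded by $m\bar k$ (at most $m$ priceable edges per path, $\bar k$ followers). Because the revenue of the uniform non-negative price $\delta$ is exactly $\delta\,N(\delta)$, it then suffices to prove the elementary fact that for every non-increasing $N\colon[0,\infty)\to\{0,1,\dots,M\}$ one has $\int_0^\infty N(\delta)\,d\delta\le (1+\ln M)\max_{\delta}\delta N(\delta)$. This follows by splitting the integral at $\delta=R^*/M$, where $R^*=\max_\delta \delta N(\delta)$: on $[0,R^*/M]$ bound $N\le M$, and on $[R^*/M,\infty)$ use $N(\delta)\le R^*/\delta$ together with the fact that integrality forces $N(\delta)=0$ once $\delta>R^*$. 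Taking $M=m\bar k$ yields $\pi(p^*)\le O(\log(m\bar k))\cdot\max_\delta\delta N(\delta)\le O(\log(m\bar k))\,\pi(p_+^*)$, so $PoP(\mathcal M)=O(\log(m\bar k))$ for every $\mathcal M\in\mathcal F(m,\bar k)$.

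For the matching lower bound I would combine the two constructions already announced in the introduction. Theorem~\ref{thm:inf} gives single-follower instances on generalized Braess graphs with $PoP=\Omega(\log m)$, and Theorem~\ref{thm:mul} gives path-graph instances with $PoP=\Omega(\log\bar k)$ once there are enough followers, the constraint $\bar k\le 2^m$ being exactly what lets $\log\bar k$ reach order $m$. Given a target pair $(m,\bar k)$, I would take whichever of the two constructions is larger and pad it into $\mathcal F(m,\bar k)$: adding followers whose source--sink pair forces the empty path (so they pay $0$ in either regime), and, if needed, adding dominated priceable edges that no follower ever uses, neither of which decreases $PoP$. Since $\log(m\bar k)=\log m+\log\bar k\le 2\max(\log m,\log\bar k)$, the better of the two padded instances already attains $PoP=\Omega(\log(m\bar k))$, which matches the upper bound and completes the proof.

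I expect the main obstacle to be the first direction, and within it the passage from the combinatorial optimum to the clean analytic statement: isolating $v_k$ as the correct per-follower surrogate, verifying $q_k\le v_k$ (which is where the outside-option and tie-breaking assumptions enter), and recognizing that summing the derivatives $r_k$ collapses the whole problem into the one-line integral estimate above. The lower bound, by contrast, is essentially bookkeeping on top of Theorems~\ref{thm:inf} and~\ref{thm:mul}.
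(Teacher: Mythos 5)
Your proof is correct and follows essentially the same route as the paper: the upper bound rests on the guarantee of a single uniform non-negative price (the paper simply cites the $H_{m\bar k}$-approximation of Briest et al.~\cite{BrHoKr12} in Theorem~\ref{thm:upm}, whereas you re-derive that guarantee from scratch via the concavity of $g_k$ and the estimate $\int_0^\infty N(\delta)\,d\delta\le(1+\ln M)\max_\delta \delta N(\delta)$, which is a correct self-contained substitute), and the lower bound combines the single-follower Braess construction of Theorem~\ref{thm:inf} with the many-follower path construction of Theorem~\ref{thm:mul}, exactly as the paper does. The only details to polish are that, with leader-favourable tie-breaking, the revenue at uniform price $\delta$ is $\delta N(\delta^-)$ rather than exactly $\delta N(\delta)$ (the inequality revenue $\ge\delta N(\delta)$ is all your argument needs), and that the padding of the lower-bound instances must respect the paper's irredundancy assumption, e.g.\ by attaching each dummy priceable edge to its own dummy follower with reservation value $0$.
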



As a warmup, we give an example that illustrates the phenomenon we want to study. A similar example was first given by Labb\'{e} et al.~\cite{LaMaSa98}.
\begin{example}
  \label{ex:BraessGraph}
Consider a game with one follower that chooses a shortest path from $s$ to $t$ in the network of Figure \ref{fig:whe}.
The priceable edges are depicted
as thicker arrows, and right above of each edge is its cost or price. The reservation value is $R=3$.
Let $P_1, P_2$ and $P_3$ denote the paths defined by the sequences of nodes $(s,u,v,t)$, $(s,u,t)$ and  $(s,v,t)$, respectively.
\begin{figure}[h]
\centering
\begin{tikzpicture}[scale=.6,->,shorten >=1pt,auto,node distance=2.5cm,
  thick,main node/.style={circle,fill=blue!20,draw,minimum size=20pt,font=\sffamily\Large\bfseries},source node/.style={circle,fill=green!20,draw,minimum size=15pt,font=\sffamily\Large\bfseries},dest node/.style={circle,fill=red!20,draw,minimum size=15pt,font=\sffamily\Large\bfseries}]
\node[source node] (source) at (-4.5,0) {$s$};
\node[main node] (u) at (-1.5,0) {$u$};
\node[main node] (v) at (1.5,0) {$v$};
\node[dest node] (sink) at (4.5,0) {$t$};
\draw (source) to [bend left=45] node[above] {$1$} (v);
\draw[ultra thick] (source) to node[above] {$p_1$} (u);
\draw[ultra thick] (u) to node[above] {$p_2$}  (v);
\draw (u) to [bend right=45] node[above] {$1$} (sink);
\draw[ultra thick] (v) to node[above] {$p_3$}  (sink);
\end{tikzpicture}
\caption{A Braess graph} \label{fig:whe}
\end{figure}
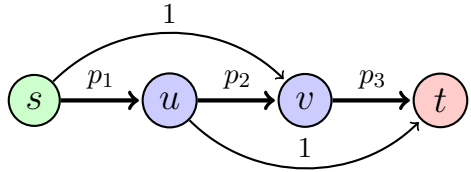

Suppose that $p_1,p_2,p_3\geq 0$.  If the leader wants to induce $P_1$ as a shortest path for the follower, then 
necessarily $p_1+p_2\leq 1$ and $p_2+p_3\leq 1$ and thus $\pi(p)=p_1+p_2 + p_3\leq 2$. If the leader wants the follower to choose path $P_2$, then $p_1+1\leq 3$ and thus $\pi(p)=p_1\leq 2$. Similarly for path $P_3$. Combining these three statements implies $\pi(p_+^*)=2$.

Now, suppose that $p^*=(p_1,p_2,p_3 )=(3,-3,3)$, then $P_1$ has a cost of 3, and $P_2$ and $P_3$ have a cost of 4, and thus the follower will choose path $P_1$. Hence $\pi(p^*)=3$ and $PoP(\mathcal{M})=3/2$.

Negative prices can be seen as a simplified type of bundling scheme. Consider the following situation that can be modeled
by the same network.
A company is selling high quality pasta
and high quality tomato sauce to buyers that want a high quality meal for $\$3$ (a meal is of high quality if the pasta or the sauce has high quality). A low quality
pasta is being sold for $\$1$ and a low quality tomato sauce for $\$1$. If the company sets a positive
price for each individual product, the maximum possible profit is $\$2$. But it might as well give a discount of $\$1$
if both high quality products are bought together, getting the full profit of $\$3$.
\end{example}

We now prove the main theorem. We separate it into three distinct theorems: one for the upper bound, one for a lower bound with $\bar{k}=1$ and another for a lower bound with $\bar{k}=2^m-1$.

Let $\ell^k_v(p)$ denote the cost of a shortest $s-v$
path of follower $k\in K$ for a given vector of prices $p\in\mathbb{R}^{|E_p|}$.

\begin{theorem}\label{thm:upm}
Let $\mathcal{M}\in \mathcal{F}(m,\bar k)$ and $H_n=\sum_{i=1}^n\frac{1}{i}$ denote the $n$-th harmonic number. Then
\[PoP(\mathcal{M})\leq H_{m \bar k}.\]
\end{theorem}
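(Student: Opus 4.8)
The plan is to compare the unrestricted optimum against the profit obtainable by a single \emph{uniform} price, which is itself a feasible non-negative strategy, and to show that this comparison loses only a harmonic factor. This is the route suggested by the single-price approximation results of Balcan et al.~\cite{BaBlMa08} and Briest et al.~\cite{BrHoKr12}.

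First I would set up the uniform-price machinery. For a scalar $r\ge 0$, let $p(r)$ be the price vector assigning $r$ to every priceable edge, and for each follower $k$ let $n^k(r)$ be the number of priceable edges on the path she selects under $p(r)$ (zero if she does not travel). Since $p(r)$ is non-negative, $\pi(p(r))=\sum_k r\,n^k(r)=r\,N(r)$, where $N(r):=\sum_k n^k(r)$, and hence $\pi(p^*_+)\ge G:=\sup_{r\ge 0} r\,N(r)$. It then suffices to establish $\pi(p^*)\le H_{m\bar k}\cdot G$.

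The crux is a per-follower estimate: $\pi^k:=\sum_{e\in P^k(p^*)\cap E_p}p^*_e \le \int_0^\infty n^k(r)\,dr$ for every $k$. To prove it I would study $L(r):=\min_{P\in\mathcal P^k}\bigl(\sum_{e\in P}c_e+r\,|P\cap E_p|\bigr)$, the unconstrained shortest-path cost of follower $k$ under the uniform price $r$. Being a minimum of affine functions with non-negative slopes, $L$ is concave and non-decreasing; its right-derivative equals $n^k(r)$ on the interval $\{r:L(r)\le R^k\}$ where the follower actually travels, and $n^k(r)=0$ elsewhere. Therefore $\int_0^\infty n^k(r)\,dr=\min\{R^k,L(\infty)\}-L(0)$, where $L(\infty)$ is the cost of the cheapest priceable-edge-free path (or $+\infty$). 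On the other side, in the true optimum the follower chooses $P^k(p^*)$ only if its cost $\sum_{e\in P^k(p^*)}c_e+\pi^k$ is at most $R^k$ and at most the price-independent cost $L(\infty)$ of any priceable-free path; combining these with $L(0)=\min_P\sum_{e\in P}c_e\le \sum_{e\in P^k(p^*)}c_e$ yields $\pi^k\le \min\{R^k,L(\infty)\}-L(0)$, which is exactly the integral. I expect this step to be the main obstacle, and the delicate points are precisely that negative prices are allowed in $p^*$ (so $\pi^k$ may be negative, in which case the bound is trivial) and that the two outside options — the reservation value $R^k$ and any priceable-free path — must be invoked together to cap what the leader can extract.

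Summing the per-follower bound gives $\pi(p^*)=\sum_k\pi^k\le\int_0^\infty N(r)\,dr$, and it remains to convert this integral into the harmonic factor. Here I would use that $N$ is integer-valued and bounded by $m\bar k$, so by the layer-cake identity $\int_0^\infty N(r)\,dr=\sum_{j=1}^{m\bar k}\lambda\bigl(\{r\ge 0:N(r)\ge j\}\bigr)$; since $r\,N(r)\le G$ forces $r\le G/j$ whenever $N(r)\ge j$, each term is at most $G/j$, and summing gives $\int_0^\infty N(r)\,dr\le G\sum_{j=1}^{m\bar k}\tfrac1j=G\,H_{m\bar k}$. Chaining the inequalities yields $\pi(p^*)\le H_{m\bar k}\,G\le H_{m\bar k}\,\pi(p^*_+)$, which is the claim.
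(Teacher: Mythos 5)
Your proof is correct and follows essentially the same route as the paper: both compare the unrestricted optimum against uniform single-price strategies and lose only the harmonic factor. The only difference is that the paper invokes the single-price approximation guarantee of Briest et al.~\cite{BrHoKr12} as a black box applied to the surplus bound $\sum_{k}R^k-\ell^k_t(0)$, whereas you re-derive that guarantee from scratch via the concavity of $L(r)$ and the layer-cake decomposition of $N(r)$.
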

\begin{proof}
Let $p^*\in\mathbb{R}^{E_p},p_+^*\in\mathbb{R}_+^{E_p}$ denote an optimal, optimal positive solution of the leader, respectively. Notice that 
\[\pi(p^*)\leq \sum_{k\in K}R^k-\ell^k_t(0).\]
Briest et al.~\cite{BrHoKr12} proved that for all $\epsilon>0$ there exists an algorithm that in polynomial time calculates a  single-price strategy, i.e., one that sets the same price on all priceable edges, that yields a profit of at least 
\[
\frac{\sum_{k\in K}R^k-\ell^k_t(0)}{(1+\epsilon)\cdot H_{m \bar k}}
\]
and thus
\[\pi(p_+^*)\geq \frac{\sum_{k\in K}R^k-\ell^k_t(0)}{H_{m \bar k}}.\]
Combining the above two statements yields
\[\frac{\pi(p^*)}{\pi(p_+^*)}\leq H_{m \bar k}.\]
\end{proof}

The next result shows that the price of positivity can be arbitrarily large, even if there is only a single follower.
\begin{theorem}\label{thm:inf}
For all $n\in\mathbb{N}$, there exists a model $\mathcal{M}$ with $m=2^{2n-1}-1$ and $\bar{k}=1$
such that
\[PoP(\mathcal{M})=n.\]
\end{theorem}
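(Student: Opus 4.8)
The plan is to construct an explicit family of single-follower instances on generalized Braess graphs \cite{Ro06} for which the price of positivity equals exactly $n$. The theorem statement already fixes the target parameters, $m = 2^{2n-1}-1$ priceable edges and $\bar k = 1$, so the count of priceable edges will be a consequence of the recursive structure of the construction. Recall that the generalized Braess graph $B_n$ is built recursively: $B_1$ is essentially the single Braess graph of Example~\ref{ex:BraessGraph}, and $B_{n}$ is obtained by replacing edges of $B_{n-1}$ with scaled copies of the gadget, so that the number of ``zig-zag'' priceable edges roughly doubles at each level. I would first set up this recursion carefully, choosing the fixed costs $c_e$ and the reservation value $R$ so that at each level the leader is confronted with the same combinatorial tension as in the warmup example, where positive prices on a single detour edge can only capture a bounded amount of surplus but a coordinated positive/negative bundle captures strictly more.

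Next I would compute the optimal \emph{unrestricted} profit $\pi(p^*)$. The idea is that with negative prices allowed, the leader can force the follower onto one long ``zig-zag'' path that traverses all the priceable edges, setting large positive prices on the ``forward'' priceable edges and compensating negative prices on the ``backward'' priceable edges so that this path remains the unique shortest path at total cost exactly $R$. By the recursive scaling of costs this should yield $\pi(p^*) = R$ (after normalizing $R$), i.e. the leader extracts the full surplus $R - \ell^1_t(0)$ with $\ell^1_t(0)=0$ in the normalized instance. Here I would use the tie-breaking and irredundancy assumptions from the model so that the intended path is genuinely selected.

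Then I would compute the optimal \emph{non-negative} profit $\pi(p_+^*)$. With prices constrained to be non-negative, the coordinated bundle is no longer available, and the follower can always ``escape'' a partially-priced zig-zag path via the zero-cost structural edges of the Braess gadget. The heart of the argument is an inductive claim: in $B_n$, any non-negative pricing yields profit at most $R/n$. The inductive step should show that the best the leader can do with positive prices is to price one ``layer'' of the recursion, capturing a $1/n$ fraction of the surplus, because capturing more on any edge makes one of the bypass paths cheaper and causes the follower to defect. Combining the two computations gives $PoP(\mathcal{M}) = R/(R/n) = n$.

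The hard part will be the lower-bound-on-restriction direction, namely proving rigorously that $\pi(p_+^*) \le R/n$ rather than merely exhibiting a positive strategy achieving $R/n$. This requires an exhaustive-but-clean case analysis over which path the follower takes under an arbitrary non-negative price vector, organized along the recursive layers so that the induction hypothesis on $B_{n-1}$ can be invoked on each scaled sub-gadget. I would structure this as a single lemma asserting that for every non-negative $p$ and every choice of a shortest path in $B_n$, the collected profit on that path is bounded by the surplus available at its outermost layer plus the inductively-bounded profit of the sub-instance it enters, and then optimize the resulting recurrence to get the $R/n$ bound. Verifying that the fixed costs can be chosen consistently at all $n$ levels so that every bypass constraint is simultaneously tight is where the bookkeeping will be most delicate.
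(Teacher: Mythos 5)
There is a genuine gap, on two levels. First, the construction itself is misdescribed and underspecified. The paper's instance is not a recursive edge-replacement graph: it is the flat $h$-th Braess graph of Figure~\ref{fig:bra} with $h=4^{n-1}$, where every $s$-$t$ path is determined by one entry edge $(s,v_i^\ell)$ and one exit edge $(v_j^r,t)$, and all of the recursion lives in the \emph{cost sequence} $(1);(2);(2,3);(2,3,3,4);\dots$ assigned to the entry edges. Your recursion, in which the number of zig-zag priceable edges ``roughly doubles at each level'' while the price of positivity increases by one per level, cannot produce $m=2^{2n-1}-1=2\cdot 4^{n-1}-1$ priceable edges: in the paper each additional unit of $PoP$ costs a factor of $4$ in $m$ (two doubling steps of the cost sequence per unit of profit), so a doubling recursion would be off by a square in the edge count, and more importantly a gadget-replacement graph creates cross-level bypass paths whose interaction you have not specified how to control.

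Second, and more seriously, the entire technical content of the theorem is the upper bound on $\pi(p_+^*)$, and your proposal explicitly defers it (``the hard part will be \dots'', ``where the bookkeeping will be most delicate''). Asserting an inductive claim that each layer captures at most a $1/n$ fraction of the surplus is not a proof that the claim holds; in particular, nothing in your sketch rules out a non-negative pricing that profitably sells a path spanning several layers at once. The paper closes exactly this hole with Lemmas~\ref{lem:dir} and~\ref{lem:zig}, which bound $\pi_+^*(P_{ij})$ for every candidate path by differences of fixed costs of bypass edges, and then uses the translation-invariance of the cost sequence ($c^\ell_{h/2+2^{i''-1}+j}=c^\ell_{h/2+j}+1$) to reduce any path to one straddling a block boundary, where the bound of $2$ follows directly. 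The unrestricted-price half of your plan (large positive prices on the rungs, compensating negative prices on the back edges, extracting the full surplus $R$) is correct and matches the paper, but without a worked-out analogue of the path-by-path upper bound your argument does not establish $PoP(\mathcal{M})=n$.
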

\begin{proof}
Assume there is only one follower, i.e., $\bar k=1$, that has reservation value $R$. For ease of notation, we omit the superscripts that denote the followers. Consider the network of Figure \ref{fig:bra}, where $h\geq1$. Besides $s$ and $t$, there are $h$ left nodes $v^\ell_1,v^\ell_2,\dots,v^\ell_h$ and $h$ right nodes $v^r_1,v^r_2,\dots,v^r_h$. There is a fixed cost edge
from $s$ to each left vertex $v^\ell_i$, with cost $c^\ell_i$, for each $i\in\left\{ 1,\dots,h \right\}$.
There is a fixed cost edge from each right vertex $v^r_i$ to $t$, with cost $c^r_i$, for each $i\in\left\{ 1,\dots,h \right\}$. There is a priceable edge from $v^\ell_i$ to $v^r_i$ for each $i\in\left\{ 1,\dots,h \right\}$ and a priceable edge from $v^r_i$ to $v^\ell_{i+1}$ for each $i\in\left\{ 1,\dots,h-1 \right\}$. Denote their prices as $p^r_i$ and $p^\ell_i$, respectively.
Assume that $c^{\ell}_1=c^r_h=0$, and $c^{\ell}_i, c^r_i\geq 0$ for all $i\in\{1,\ldots,h\}$. 

\begin{figure}[h]
\centering
\begin{tikzpicture}[scale=.9,->,shorten >=1pt,auto,node distance=2.5cm,
  thick,main node/.style={circle,fill=blue!20,draw,minimum size=20pt,font=\sffamily\Large\bfseries},source node/.style={circle,fill=green!20,draw,minimum size=15pt,font=\sffamily\Large\bfseries},dest node/.style={circle,fill=red!20,draw,minimum size=15pt,font=\sffamily\Large\bfseries}]
\node[source node] (source) at (-3,0) {$s$};
\node[main node] (upl) at (0,2.8) {};
\node[main node] (upr) at (3,2.8) {};
\node[main node] (midupl) at (0,1) {};
\node[main node] (midupr) at (3,1) {};
\node[main node] (middownl) at (0,-1) {};
\node[main node] (middownr) at (3,-1) {};
\node[main node] (downl) at (0,-3) {};
\node[main node] (downr) at (3,-3) {};
\node[dest node] (sink) at (6,0) {$t$};
\node[draw=none] (inv1) at (1.5,2) {\vdots};
\node[draw=none] (inv2) at (1.5,-2) {\vdots};
\draw (source) to node[above] {$c_1^{\ell}$} (upl);
\draw (source) to node[above] {$c_{h/2}^{\ell}$} (midupl);
\draw (source) to node[above] {$c_{h/2+1}^{\ell}$} (middownl);
\draw (source) to node[above] {$c_h^{\ell}$}  (downl);
\draw[ultra thick] (upl) to node[above] {$p^r_1$}  (upr);
\draw (upr) to node[above] {$c_1^r$}  (sink);
\draw[ultra thick,-] (upr) to (inv1);
\draw[ultra thick] (inv1) to (midupl);
\draw[ultra thick] (midupl) to node[above] {$p^r_{h/2}$}  (midupr);
\draw (midupr) to node[above] {$c_{h/2}^r$}  (sink);
\draw[ultra thick] (midupr) to node[above] {$p^\ell_{h/2}$}  (middownl);
\draw[ultra thick] (middownl) to node[above] {$p^r_{h/2+1}$}  (middownr);
\draw (middownr) to node[above] {$c_{h/2+1}^r$}  (sink);
\draw[ultra thick,-] (middownr) to (inv2);
\draw[ultra thick] (inv2) to (downl);
\draw[ultra thick] (downl) to node[above] {$p^r_{h}$}  (downr);
\draw (downr) to node[above] {$c_h^r$}  (sink);
\end{tikzpicture}
\caption{The $h$-th Braess graph.} \label{fig:bra}
\end{figure}
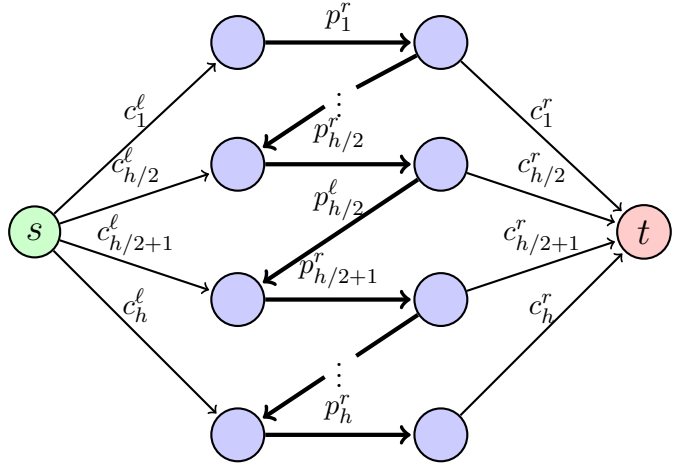

Note first that every
path in the graph is completely defined by an edge leaving $s$ and an edge  entering $t$. 
For  $1\leq i\leq j\leq h$, we define $P_{ij}$ as the path starting with edge $(s,v_i^{\ell})$ and ending with edge $(v_j^r,t)$.
Let $\pi_+^*(P_{ij})$ denote the maximum profit when we impose that path $P_{ij}$ is shortest path and the prices are all
non-negative. Observe that the leader can always delete a priceable edge by setting its price to be $+\infty$.
Notice that path $P_{ij}$ might never be a shortest path, so then we define $\pi_+^*(P_{ij})=0$. 
In order to bound the optimal profit with positive prices, we need the following two lemmas. We bound 
$\pi_+^*(P_{ij})$ first for the case $i=j$ and then for the case $i<j$.

\begin{lemma}\label{lem:dir}
Let $1\leq i\leq h$. If $c_i^{\ell}+c_i^r>R$, then $\pi_+^*(P_{ii})=0$. If $c_i^{\ell}+c_i^r\leq R$, then \[\pi_+^*(P_{ii})\leq R-c_i^{\ell}-c_i^r.\]
\end{lemma}
\begin{proof}
  The fixed cost of path $P_{ii}$ equals $c_i^{\ell}+c_i^r$. If $c_i^{\ell}+c_i^r>R$, then $P_{ii}$ is never a shortest path when prices are non-negative and thus $\pi_+^*(P_{ii})=0$. Otherwise, the cost of $P_{ii}$ for the
  follower is $p^r_i + c^\ell_i + c^r_i$, so if it is less than the reservation value $R$, we must have that $\pi_+^*(P_{ii})
  = p^r_i\leq R-c_i^{\ell}-c_i^r$.
\end{proof}

\begin{lemma}\label{lem:zig}
Let $1\leq i\leq j'< i'\leq j\leq h$. If $c_i^{\ell}>c_{i'}^{\ell}$, then $\pi_+^*(P_{ij})=0$. If $c_j^r>c_{j'}^r$, then $\pi_+^*(P_{ij})=0$. If $c_i^{\ell}\leq c_{i'}^{\ell}$ and $c_j^r\leq c_{j'}^r$, then
\[\pi_+^*(P_{ij})\leq c_{i'}^{\ell}-c_i^{\ell}+c_{j'}^r-c_j^r.\]
\end{lemma}
\begin{proof}
If $c_i^{\ell}>c_{i'}^{\ell}$, then $P_{i'j}$ is always cheaper for the follower than $P_{ij}$ and thus $P_{ij}$ is never a shortest path.
If $c_j^r>c_{j'}^r$, then $P_{ij'}$ is always cheaper than $P_{ij}$ and thus $P_{ij}$ is never a shortest path.

If $c_i^{\ell}\leq c_{i'}^{\ell}$ and $c_j^r\leq c_{j'}^r$, then we have to make sure that the total cost of $P_{ij}$ for the follower is at most the cost of $P_{i'j}$ and $P_{ij'}$. Then we need to ensure that it is cheaper to reach $v^\ell_{i'}$ following $P_{ij}$ that taking directly the edge $(s,v^\ell_{i'})$, and that it is cheaper to go from $v^r_{j'}$ to $t$ following $P_{ij}$ than taking directly the edge $(v^r_{j'},t)$. This means that $c^\ell_i+\sum_{m=i}^{i'-1} p^r_m + \sum_{m=i}^{i'-1} p^\ell_m \leq c_{i'}^{\ell}$ and $c^r_j+ \sum_{m=j'+1}^{j} p^r_m + \sum_{m=j'}^{j-1} p^\ell_m\leq c_{j'}^r$, and thus $\pi_+^*(P_{ij}) = \sum_{m=i}^j p^r_m + \sum_{m=i}^{j-1}p^\ell_m \leq c_{i'}^{\ell}-c_i^{\ell}+c_{j'}^r-c_j^r$.
\end{proof}

Let $n\in\mathbb{N}$, with $n\geq 2$, and $h=4^{n-1}$. We define fixed costs for the $h$-th Braess graph such
that the maximum profit using non-negative prices is at most $2$ while the maximum profit when allowing negative
prices is $2n$. Let $R=2n$. We will set costs that satisfy  
\begin{align}
  c_i^{\ell}=2n-2-c_i^r=c_{h/2+1-i}^r=2n-2-c_{h/2+1-i}^{\ell}  \; \text{ for all }1\leq i\leq h/2.
  \label{eq:symmetry_h_braess}
\end{align} 
Thus, we only need to specify   $c_{h/2+i}^{\ell}$ for $1\leq i\leq h/2$.
We start by letting $c_{h/2+1}^{\ell}=1$. We define $2n-3$ sets of edges recursively, where each set $i=1,\ldots,2n-3$ consists of $2^{i-1}$ edges. For each $i=1,\ldots,2n-3$, define the fixed costs of the next $2^{i-1}$ edges as the fixed costs of all previously defined $2^{i-1}$ edges plus $1$. More precisely, for each $i=1,\dots,2n-3$, we take
$c^\ell_{h/2+2^{i-1}+j} = c^\ell_{h/2+j} +1$ for all $j=1,\dots,2^{i-1}$. In other words, we define the following sequence:
\[(1);(2);(2,3);(2,3,3,4) ; (2,3,3,4,3,4,4,5);\ldots;(2,3,\dots,2n-2).\]

Now, we will prove that given these fixed costs, $\pi_+^*(P_{ij})\leq 2$ for all $1\leq i\leq j\leq h$.
Firstly, by condition (\ref{eq:symmetry_h_braess}), $c_i^{\ell}+c_i^r=2n-2$ for all $1\leq i\leq h$, and thus by Lemma \ref{lem:dir}, $\pi_+^*(P_{ii})\leq 2$. 

Secondly, for each path $P_{ij}$ with $i\leq h/2$ and $j\geq h/2+1$, Lemma \ref{lem:zig} with $i'=h/2+1$ and $j'=h/2$ implies $\pi_+^*(P_{ij})\leq 2-c_i^{\ell}-c_j^r\leq 2$. Then by symmetry of the constructed graph, we can restrict ourselves to paths $P_{ij}$ with $i\geq h/2+1$.

Thirdly, consider a path $P_{ij}$ with $i\geq h/2+1$ and suppose that
\begin{align}
  \frac{h}{2}+1\leq i\leq \frac{h}{2} +2^{i''-1} \text{ and } \frac{h}{2} +2^{i''-1}+1 \leq j \leq \frac{h}{2} + 2^{i''}
  \text{ for some } 1\leq i''\leq 2n-3.
  \label{eq:power_condition}
\end{align}
 Necessarily $c^\ell_i\geq 1$ and $c^r_j = 2n -2 - c^\ell_j \geq 2n -2- (i''+1)$ from
the definition of the costs. If we take $i'=h/2+2^{i''-1}+1$ and $j'=h/2+2^{i''-1}$, then
$c^\ell_{i'}= 2$ and $c^r_{j'}= 2n-2-i''$. So Lemma \ref{lem:zig} 
implies that $\pi_+^*(P_{ij})\leq c^\ell_{i'}-c^\ell_{i}+ c^r_{j'} -c^r_{j} \leq  2$.
If condition (\ref{eq:power_condition}) does not hold, then $h/2 + 2^{i''-1}+1
\leq i\leq j \leq h/2 + 2^{i''}$, for some $1\leq i''\leq 2n-3$. Therefore, by lemma
\ref{lem:zig}, for any $i\leq j'<i'\leq j$, we have that
$\pi_+^*(P_{ij})\leq c^\ell_{i'}-c^\ell_{i}+ c^r_{j'} -c^r_{j}$. But from the definition of the costs,
$c^\ell_{i'}-c^\ell_{i}+ c^r_{j'} -c^r_{j} = c^\ell_{i'- 2^{i''-1}}-c^\ell_{i-2^{i''-1}}+ c^r_{j'-2^{i''-1}} -c^r_{j-
  2^{i''-1}}$. So any bound that we can derive by applying Lemma \ref{lem:zig} for $P_{i-2^{i''-1},j-2^{i''-1}}$, also
  holds for $P_{ij}$. If $i<j$ we can iterate this procedure until condition (\ref{eq:power_condition}) holds, so
  we conclude that $\pi^*_+(P_{ij})\leq 2$.

Combining the above three steps yields $\pi_+^*(P_{ij})\leq 2$ for all $1\leq i\leq j\leq h$. Notice that with the prices
$p^r_i = R$ for $i=1,\dots,h$ and $p^\ell_i=-R$ for $i=1,\dots,h-1$, every $s,t$-path costs at least $R$ and path
$P_{0h}$ yields a profit of $R$. Thus, $PoP(\mathcal{M})=\frac{2n}{2}=n$ for each $n\in\mathbb{N}$.


\end{proof}

\begin{remark}
The proof of Theorem \ref{thm:inf} can also be used to prove the following statement. For each $n\in N$, there exists a model $\mathcal{M}$, with an optimal price vector $p^*$ and an optimal positive prices vector $p^*_+$ such that the cost for the follower is $n$ times smaller under $p^*_+$ than under $p^*$.
\end{remark}

The last result of this section models the flight carrier situation discussed in the introduction. It proves that the price of positivity can be arbitrarily large for series-parallel networks, as long as there are sufficiently many followers.

\begin{theorem}\label{thm:mul}
For all $m\in\mathbb{N}$, there exists a model
$\mathcal{M}$ consisting of a path graph $G$ with $m$ priceable edges and $\bar k =2^m-1$ followers such that
\[PoP(\mathcal{M})= m/2.\]
\end{theorem}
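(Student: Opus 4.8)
The plan is to exhibit a single explicit instance and compute both quantities exactly, mirroring the strategy of Theorem~\ref{thm:inf}: I would design the path so that the optimal \emph{positive} profit is capped at a constant (here $2$) while the optimal \emph{unrestricted} profit grows like $m$, so that $PoP(\mathcal M)=m/2$. Concretely, I take the path $s=v_0,v_1,\dots,v_m=t$ whose $m$ edges are all priceable with fixed cost $0$, and I place the $2^m-1$ followers on nested sub-intervals of this path, where follower $k$ travels the block $[a_k{+}1,b_k]$ determined by its source–sink pair $(v_{a_k},v_{b_k})$ and carries a reservation value $R^k$. The reservation values are chosen recursively and in a dyadic fashion: at each of $\Theta(m)$ scales there are followers whose (inner) intervals demand a strictly larger budget than the (outer) intervals containing them. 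It is precisely this ``inner budget larger than outer budget'' pattern that negative prices can satisfy but positive prices cannot, since for a non-negative price vector $p\in\mathbb R_+^{E_p}$ and nested intervals $I'\subseteq I$ one always has $\sum_{e\in I'}p_e\le\sum_{e\in I}p_e$. The recursive doubling of followers across scales is what produces the count $2^m-1$, and, via Theorem~\ref{thm:upm}, makes the general bound $H_{m\bar k}\approx m\ln 2$ tight up to the constant.

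For the lower bound on $\pi(p^*)$ I would write down an explicit price vector of ``sawtooth'' type, alternating large positive prices with compensating negative prices, so that each follower's chosen interval costs \emph{exactly} its reservation value $R^k$ (and every unintended interval costs strictly more, so followers route as intended and the tie-breaking assumption does not interfere). Summing the reservations collected over the followers that travel then gives $\pi(p^*)=m$ directly, analogously to the closing computation in the proof of Theorem~\ref{thm:inf} where the prices $p^r_i=R$, $p^\ell_i=-R$ force every $s$–$t$ path to cost at least $R$ while the intended path realizes the full profit. The negative prices on the ``connecting'' edges are exactly the bundling discounts of the flight-carrier story, and verifying feasibility of this vector is a routine check that each intended interval is cheapest for its follower.

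The difficult direction, and the main obstacle, is the matching upper bound $\pi(p_+^*)\le 2$ over \emph{all} non-negative price vectors. The naive nested argument is not enough: a positive solution can always refuse to serve the hard followers and instead extract moderate payments from a well-chosen sub-family, which already defeats any single nested chain of intervals. The role of the exponentially many followers is to block every such escape simultaneously, so the proof must be robust over the choice of which followers are served. I would establish this by an exchange/LP-duality argument: using that $s(I):=\sum_{e\in I}p_e$ is monotone under non-negativity, I would show that serving any collection of followers forces the interval sums into a staircase that conflicts with the reservation pattern at some scale, so that the total revenue telescopes to the constant $2$ regardless of the served set. Following the template of Lemmas~\ref{lem:dir} and~\ref{lem:zig}, I would split this into a bound on followers using a ``direct'' interval and a bound on followers using a ``nested'' interval, and then combine them across all $2^m-1$ followers. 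Once both bounds are in place, $PoP(\mathcal M)=m/2$ follows immediately.
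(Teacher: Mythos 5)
Your high-level idea is the right one --- nested intervals on a path whose inner intervals carry larger reservation values than the outer ones, which non-negative prices cannot honour because $\sum_{e\in I'}p_e\le\sum_{e\in I}p_e$ for $I'\subseteq I$ --- and this is indeed the mechanism behind the paper's construction. But your proof is not complete: you explicitly leave the upper bound $\pi(p_+^*)\le 2$ as ``the main obstacle'' with only a sketch of an ``exchange/LP-duality'' argument, and you never specify the intervals, the reservation values, or the multiplicities, so neither $\pi(p^*)=m$ nor the feasibility of your ``sawtooth'' price vector is actually established. With a general dyadic family of nested sub-intervals both steps are genuinely problematic: there are only $O(m^2)$ distinct sub-intervals of an $m$-edge path, so you cannot have $2^m-1$ followers on distinct intervals; and requiring every interval to cost exactly its reservation value is a linear system that need not be solvable for an arbitrary nested family.

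The paper avoids all of this by a much more rigid choice that you should adopt. Take only the $m$ \emph{prefix} intervals: follower group $i$ has source $s=v_0$ and sink $t^i=v_i$, reservation value $c_i=2^{m-i+1}$, and consists of $2^{i-1}$ \emph{identical} followers (the exponential count comes from multiplicities, not from distinct intervals). For non-negative prices the prefix sums $\sum_{\ell\le j}p_\ell$ are nondecreasing in $j$ while $c_j$ is decreasing, so the set of paying groups is always $\{1,\dots,i\}$ for some $i$, and the revenue is at most $c_i\sum_{j\le i}2^{j-1}\le 2\cdot 2^m$ --- the geometric reservation values and geometric multiplicities are tuned precisely so that every stopping point $i$ yields the same constant, which is the quantitative content your ``telescopes to a constant regardless of the served set'' claim is missing. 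For the lower bound, the telescoping prices $p_i=c_i-c_{i-1}$ (positive for $i=1$, negative afterwards) make every prefix cost exactly its reservation value, giving $\sum_i 2^{i-1}c_i=m\cdot 2^m$ and hence $PoP\ge m/2$. Until you commit to a concrete family with these two computations, the statement is not proved.
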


\begin{proof}
  Consider the network of Figure \ref{fig:ser}, where $m\geq 1$. Let $c_i=2^{m-i+1}$ for $i=1,\ldots,m$. There are $m$ groups of followers. For $i=1,\dots,m$ group $i$ has $2^{i-1}$ distinct followers, all with  source-sink pair $(s,t^i)$ and reservation value $c_i$.
\begin{figure}[h]
\centering
\begin{tikzpicture}[scale=.8,->,shorten >=1pt,auto,node distance=2.5cm,
  thick,main node/.style={circle,fill=blue!20,draw,minimum size=20pt,font=\sffamily\Large\bfseries},source node/.style={circle,fill=green!20,draw,minimum size=25pt,font=\sffamily\Large\bfseries},dest node/.style={circle,fill=red!20,draw,minimum size=15pt,font=\sffamily\Large\bfseries}]

  \node[source node,scale=0.9] (1) at (0,0) {$s$};
  \node[dest node,scale=0.9] (2) at (3,0) {$t^1$};
  \node[dest node,scale=0.9] (3) at (6,0){$t^2$};
  \node (4) at (7.5,0) {$\dots$};
  \node[dest node,scale=0.9] (5) at (9,0) {$t^n$};
  
   \draw[ultra thick] (1) to node[above] {$p_1$} (2);
   \draw (1) to [bend right = 30] node[below] {$c_1$} (2);
   \draw (1) to [bend right = 45] node[below] {$c_2$} (3);
   \draw (1) to [bend right = 50] node[below] {$c_n$} (5);
   \draw[ultra thick] (2) to node[above] {$p_2$} (3);
   \draw[-,ultra thick] (3) to node[above] {} (4);
   \draw[ultra thick] (4) to node[above] {$p_n$} (5);
\end{tikzpicture}
\caption{The $m$-th path graph.} \label{fig:ser}
\end{figure}
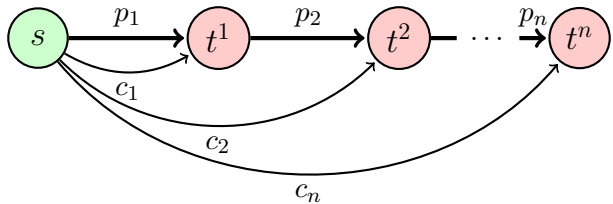

In order to bound the optimal profit with positive prices, we  claim that if a follower from group $i$ travels, then all followers $j=1,\ldots, i$ travel. This follows from the fact that $(c_i)_{i=1}^n$ is a decreasing sequence. So in order to maximize profit, the leader can restrict himself to choosing $p_1\in\{c_1,\ldots,c_m\}$ and $p_i=0$ for all $i=2,\ldots,m$. Thus, taking $p_1=c_i$ gives a profit of $2^{m-i+1} \sum_{j=1}^i 2^{i-1} \leq 2\cdot 2^m$.

The price vector $p^*_i=c_i-c_{i-1}$, where $c_0=0$, for all $i=1,\ldots,m$ yields a profit of $\sum_{i=1}^m 2^m =m\cdot 2^m$. Hence $PoP(\mathcal{M})\geq m/2$.
\end{proof}

\section{Immune Structures to the Negative Price Paradox}
We divide the following section into three subsections. The first subsection considers a single shortest path follower. We characterize the class of topologies that are immune to the negative price paradox. The second subsection considers a different variant of the model in which followers choose the basis of a matroid. We prove that all these models are immune to the negative price paradox. The third and last subsection considers a generalization of the model that includes both of the variants studied. We try to characterize the class of clutters that are immune to the negative price paradox.

\subsection{Shortest path follower}
Assume $\bar k=1$. Let $G_{st}$ denote a graph $G=(V,E)$ with a fixed source-sink pair $(s,t)$. We say that a graph $G_{st}$ is \textit{immune to the negative price paradox} if for all models $\mathcal{M}=(G_{st},(c_e)_{e\in E},E_p,R)$, $PoP(\mathcal{M})=1$. A directed $s$-$t$ graph is \textit{series-parallel} if it either consists of a single edge $(s,t)$, or is obtained from two series-parallel graphs with terminal $(s_1,t_1)$ and $(s_2,t_2)$ composed either in series or in parallel. In a \textit{series composition}, $t_1$ is identified with $s_2$, $s_1$ becomes $s$ and $t_2$ becomes $t$. In a \textit{parallel composition}, $s_1$ is identified with $s_2$ and becomes $s$, and $t_1$ is identified with $t_2$ and becomes $t$.

The main result in this section shows that the leader can extract all surplus from the follower in series-parallel graphs. This result has two interesting consequences. First, it implies that series-parallel graphs are exactly those graphs that are immune to the negative price paradox. Second, it implies that Stackelberg network pricing with a single follower are polynomial-time solvable in series-parallel graphs, as Labb\'{e} et al.~~\cite{LaMaSa98} shows that once the induced path of the follower is known, the problem is polynomial-time solvable.

\begin{theorem}\label{thm:sur}
Let $\bar k=1$. If $G$ is series-parallel, then $\pi(p_+^*)=\min\{\ell_t(+\infty)-\ell_t(0),R-\ell_t(0)\}$.
\end{theorem}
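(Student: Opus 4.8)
The plan is to establish the two inequalities $\pi(p_+^*)\le\min\{\ell_t(+\infty)-\ell_t(0),\,R-\ell_t(0)\}$ and $\pi(p_+^*)\ge\min\{\ell_t(+\infty)-\ell_t(0),\,R-\ell_t(0)\}$ separately; throughout I write $a=\ell_t(0)$ and $b=\ell_t(+\infty)$, and I treat the non-degenerate case $R\ge a$ (if $R<a$, then no path is affordable even at zero prices, so the follower never travels and $\pi(p_+^*)=0$). The upper bound does not use series-parallelity and is direct: fix any $p\in\mathbb{R}_+^{E_p}$ and let $P=P(p)$ be the follower's path. By cost-minimality $c_P$ is at most the cost of the cheapest non-priceable path, so $c_P\le b$, and by feasibility $c_P\le R$; hence $c_P\le\min\{R,b\}$. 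Since the fixed cost of $P$ is at least $a$ and $\pi(p)$ equals $c_P$ minus the fixed cost of $P$, I obtain $\pi(p)\le\min\{R,b\}-a=\min\{b-a,\,R-a\}$. As this holds for every non-negative $p$, it holds for $p_+^*$.

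For the lower bound I would prove, by induction on the series-parallel decomposition, the following stronger claim about an arbitrary series-parallel graph $H$ with source-sink $(s,t)$ and its own quantities $a_H=\ell_t(0)$, $b_H=\ell_t(+\infty)$: for every $x$ with $a_H\le x\le b_H$ there is a price vector $p\in\mathbb{R}_+^{E_p}$ such that (a) every $s$-$t$ path of $H$ has cost at least $x$, and (b) some $s$-$t$ path $P^*$ has fixed cost exactly $a_H$ and cost exactly $x$, so that the prices along $P^*$ sum to $x-a_H$. Granting the claim, the theorem follows by applying it to $H=G$ with $x=\min\{b,R\}\in[a,b]$: property (a) makes $x$ the shortest-path cost and $x\le R$ makes it affordable, while property (b) exhibits a shortest path collecting $x-a=\min\{b-a,\,R-a\}$. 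Since ties are broken in the leader's favour, and since every cost-$x$ path collects at most $x-a$ (its fixed cost being at least $a$), the realized profit is exactly $x-a$.

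The induction has three cases. In the base case $H$ is a single edge: if it is priceable, then $a_H=c_e$ and $b_H=+\infty$, and setting its price to $x-c_e\ge0$ gives cost $x$ and profit $x-a_H$; if it is non-priceable, then $a_H=b_H=c_e$ forces $x=a_H$ and the trivial price vector works. For a series composition $H=H_1\cdot H_2$ one has $a_H=a_1+a_2$ and $b_H=b_1+b_2$, so a target $x\in[a_H,b_H]$ can be split as $x=x_1+x_2$ with $x_i\in[a_i,b_i]$; applying the claim to each $H_i$, concatenating the price vectors and the witness paths, and using that every $s$-$t$ path of $H$ decomposes through the gluing vertex, yields (a) and (b) with the profits adding to $x-a_H$. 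The delicate case is the parallel composition $H=H_1\,\|\,H_2$, where $a_H=\min\{a_1,a_2\}$ and $b_H=\min\{b_1,b_2\}$: here I would pick the branch $i$ attaining $a_i=a_H$, realize cost $x$ on it with profit $x-a_H$ via the inductive claim (legal since $a_i\le x\le b_H\le b_i$), and block the other branch $j$ by raising all its priceable prices high enough that every priceable $H_j$-path costs at least $x$, which is consistent precisely because its non-priceable paths already cost at least $b_j\ge x$.

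The main obstacle is exactly this parallel step, and it is where series-parallelity is indispensable. Blocking branch $j$ with \emph{non-negative} prices is only possible because $x\le b_j$ guarantees that the unpriceable fallback on that branch is not cheaper than $x$; and the two branches being edge-disjoint lets me price them independently, so that raising prices on one branch never cheapens a competing path. This is precisely the structural feature that the Braess graph of Example~\ref{ex:BraessGraph} lacks, where $P_1$ shares edges with $P_2$ and $P_3$, so extracting surplus along $P_1$ unavoidably makes a competitor attractive. I would finally double-check the boundary behaviour when some $b_i=+\infty$ (the range conditions and the series split remain valid) and confirm that the leader-favourable tie-breaking rule is exactly what upgrades ``some shortest path collects $x-a$'' into ``the collected profit equals $x-a$''.
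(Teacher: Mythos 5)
Your proof is correct, but it takes a genuinely different route from the paper's. The paper argues by contradiction around the single fixed-cost shortest path $P(0)$: after adding an auxiliary unpriceable $(s,t)$-edge of cost $R$, it takes the profit-maximal non-negative prices that keep $P(0)$ optimal and observes that if the profit falls short of $\ell_t(+\infty)-\ell_t(0)$, every priceable edge on $P(0)$ must be blocked by a ``tight'' detour ear whose fixed cost equals that of its nest interval; by Eppstein's result that greedily built open ear decompositions of series-parallel graphs are nested, these tight ears have pairwise nested or disjoint intervals and can be concatenated into an $s$-$t$ path avoiding all priceable edges of cost strictly below $\ell_t(+\infty)$, a contradiction. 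You instead run a direct induction over the series/parallel decomposition tree, proving the stronger constructive claim that any target $x\in[\ell_t(0),\ell_t(+\infty)]$ can be made the exact shortest-path cost while a witness path of fixed cost $\ell_t(0)$ collects $x-\ell_t(0)$. Your version is more self-contained (no ear-decomposition machinery), isolates exactly where series-parallelity is used (in the parallel step the two branches are edge-disjoint, so they can be priced independently, and $x\le b_j$ lets you block branch $j$ with non-negative prices), and yields an explicit recursive construction of the optimal prices; the paper's argument is shorter once nestedness is granted and pinpoints the obstruction as one tight ear per priceable edge. Two points to make explicit when writing yours up: in the series case, justify that every $s$-$t$ path splits at the gluing vertex (immediate, since the two components share only that vertex and $s$, $t$ lie on opposite sides), and exclude the degenerate case $R<\ell_t(0)$, where the right-hand side of the theorem is negative while the actual profit is $0$ --- a case the paper also implicitly ignores.
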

\begin{proof}

We use what is called an open ear decomposition of the graph. An open ear is a directed (simple) path. An open ear decomposition of $G$ is a partition of $E$ into open ears $\{E_1,\ldots, E_{\bar h}\}$ such that the two endpoints of each open ear $E_i$, with $i\geq 2$, belong to some previous open ears $E_j$ and $E_{j'}$, with $j<i$ and $j'<i$, and each internal vertex of each open ear $E_i$ does not belong to $E_j$ for all $j<i$. Given an open ear decomposition $\{E_1,\ldots, E_{\bar h}\}$, we say that $E_i$ is nested in $E_j$ if $i>j$ and the endpoints of $E_i$ both appear in $E_j$. For such $i$ and $j$, let the nest interval of $E_i$ and $E_j$ be the path in $E_j$ between the two endpoints of $E_i$. An open ear decomposition $\{E_1,\ldots, E_{\bar h}\}$ is nested if for each $i\geq 2$, $E_i$ is nested in some $E_j$, and if $E_i$ and $E_{i'}$ are both nested in $E_j$, then either the nest interval of $E_i$ contains that of $E_{i'}$, or vice versa, or the two nest intervals are disjoint. An open ear decomposition can be found starting with an arbitrary $s$-$t$ path and greedily adding open ears. Eppstein~\cite{Ep92} showed that the decompositions found this way are always nested in series-parallel graphs. Now starting with the optimal path $P(0)$, we describe a procedure to construct a specific ear decomposition $\{P(0),E_2,\ldots, E_{\bar h}\}$, which, since $G$ is series-parallel, must be nested.

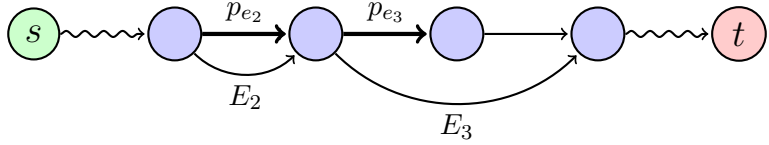
\begin{figure}[h]
\centering
\begin{tikzpicture}[->,shorten >=1pt,auto,node distance=2cm,
  thick,main node/.style={circle,fill=blue!20,draw,minimum size=20pt,font=\sffamily\Large\bfseries},source node/.style={circle,fill=green!20,draw,minimum size=15pt,font=\sffamily\Large\bfseries},dest node/.style={circle,fill=red!20,draw,minimum size=15pt,font=\sffamily\Large\bfseries},scale=0.75]
\node[source node] (ssource) at (-6.25,0) {$s$};
\node[main node] (source) at (-3.75,0) {};
\node[main node] (up) at (-1.25,0) {};
\node[main node] (down) at (1.25,0) {};
\node[main node] (sink) at (3.75,0) {};
\node[dest node] (ssink) at (6.25,0) {$t$};
\draw[->,decorate,decoration={snake,amplitude=.4mm,segment length=2.5mm,post length=1mm}] (ssource) to (source);
\draw (source) to (up);
\draw[ultra thick] (source) to node[above] {$p_{e_2}$}  (up);
\draw (source) to [bend right=45] node[below] {$E_2$} (up);
\draw (up) to [bend right=45] node[below] {$E_3$}  (sink);
\draw[ultra thick] (up) to node[above] {$p_{e_3}$} (down);
\draw (down) to (sink);
\draw[->,decorate,decoration={snake,amplitude=.4mm,segment length=2.5mm,post length=1mm}] (sink) to (ssink);
\end{tikzpicture}
\caption{A series-parallel graph is immune to the negative price paradox.} \label{fig:sp}
\end{figure}

In this proof, we consider the graph $G'$, which is defined as $G$ with an unpriceable edge from $s$ to $t$ with a fixed cost of $R$. Notice that the problems in $G$ and $G'$ are esssentially equivalent. In particular, $G'$ is series-parallel, $\ell'_t(+\infty)=\min\{\ell_t(+\infty),R\}$, and $\pi(p_+^*)$ in $G'$ equals $\pi(p_+^*)$ in $G$. Let $P(0)$ be a shortest $s$-$t$ path in $G'$ when the prices are all set to $0$. Denote by $p'$ the vector of non-negative prices that yields the maximum profit under the restriction that $P(p')=P(0)$. 

Assume that $\pi(p')<\ell'_t(+\infty)-\ell_t(0)$. We prove that there exists a path $\hat{P}\subseteq E'\setminus E_p$ with cost strictly less than $\ell'_t(+\infty)$, which contradicts the definition of $\ell'_t(+\infty)$. Firstly, create the vector $p''$ starting with $p'$ and setting the price of all priceable edges
outside $P(0)$ to $+\infty$. Clearly $\pi(p'')=\pi(p')$. Take any $e_2 \in P(0)\cap E_p$. Note that there must be an open ear $E_2$ (a detour for $e_2$) such that the fixed costs of $E_2$ are equal the total costs of the nest interval $\hat{E}_2\subseteq P(0)$. If such an open ear would not exist, the leader could marginally increase the price $p_{e_2}''$ without changing the shortest path of the follower, yielding a strictly higher profit. Now take $e_3\in P(0)\cap E_p \setminus \hat{E}_2$.
Analogously, there must be an open ear $E_3$ with nesting interval $\hat{E}_3\subseteq P(0)$ such that
the total cost of $E_3$ is equal to the total cost of $\hat{E}_3$. Inductively, define
$e_h\in P(0)\cap E_p\setminus \bigcup_{i=1}^{h-1} \hat{E}_i$, an open ear $E_h$ that is a detour for $e_h$,
and its nesting interval $\hat{E}_h\subseteq P(0)$ such that the cost of $E_h$ is equal to the cost of $\hat{E}_h$, until 
$P(0)\cap E_p\setminus \bigcup_{i=1}^{h-1} \hat{E}_i =\emptyset$.
Since $G'$ is series-parallel, the open ear decomposition must be nested, so using ears with disjoint nesting intervals
we can define an $s$-$t$ path $\hat{P}$ with the same cost as $P(0)$ under $p''$ that uses only fixed cost edges. We conclude noting
that the cost of $P(0)$ under $p''$ is $\pi(p') + \ell'_t(0)<\ell'_t(+\infty)$.

\end{proof}

\begin{corollary}\label{thm:sp}
Let $\bar k=1$. A graph $G_{st}$ is immune to the negative price paradox if and only if $G_{st}$ is a series-parallel graph.
\label{th:SP_singleF}
\end{corollary}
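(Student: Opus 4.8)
The plan is to derive the corollary from Theorem~\ref{thm:sur} for one direction and from an explicit paradoxical construction for the other. I would prove the sufficiency direction (series-parallel $\Rightarrow$ immune) by pairing Theorem~\ref{thm:sur} with a matching upper bound on $\pi(p^*)$ valid for \emph{every} graph, and the necessity direction (immune $\Rightarrow$ series-parallel) by contraposition, embedding the Braess instance of Example~\ref{ex:BraessGraph} into any non--series-parallel $G_{st}$.

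For sufficiency, I would first show that for any single-follower model,
\[
\pi(p^*)\leq \min\{\ell_t(+\infty)-\ell_t(0),\,R-\ell_t(0)\}.
\]
Writing $P^*=P(p^*)$ for the follower's chosen path, the profit equals $c_{P^*}-\sum_{e\in P^*}c_e$; the fixed cost of any path is at least $\ell_t(0)$, while feasibility together with the follower's option of taking a cheapest priceable-free path force $c_{P^*}\leq\min\{R,\ell_t(+\infty)\}$. Combining these inequalities gives the displayed bound. Since $\pi(p_+^*)\leq\pi(p^*)$ trivially, and Theorem~\ref{thm:sur} shows that in a series-parallel graph $\pi(p_+^*)$ attains exactly this upper bound, it follows that $\pi(p^*)=\pi(p_+^*)$, hence $PoP(\mathcal{M})=1$ and $G_{st}$ is immune.

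For necessity I would invoke the classical forbidden-subgraph characterization of two-terminal series-parallel graphs (Duffin; Valdes, Tarjan and Lawler): a directed $s$-$t$ graph is series-parallel if and only if it contains no subgraph homeomorphic to the Braess graph of Figure~\ref{fig:whe}, with the source and sink of that graph identified with $s$ and $t$. Fixing such a subdivision $W$ in a non--series-parallel $G_{st}$, I would build a model that forces $G_{st}$ to behave exactly like Example~\ref{ex:BraessGraph}: on each of the three priceable branches of $W$ declare a single edge priceable and give the remaining edges cost $0$; give each of the two fixed branches total cost $1$; assign every edge outside $W$ a cost $M$ large enough that no path using it can ever be shortest, even after accounting for the negative prices; and set $R=3$. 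Internal vertex-disjointness of the branches guarantees that the only relevant $s$-$t$ routes are the three paths $P_1,P_2,P_3$ of the example, so $\pi(p_+^*)=2$ while the prices $(3,-3,3)$ give $\pi(p^*)=3$. Hence $PoP(\mathcal{M})=3/2>1$ and $G_{st}$ is not immune.

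The main obstacle is the necessity direction. The delicate points are, first, to pin down the correct forbidden structure for \emph{directed} two-terminal series-parallel graphs and to confirm that it is the oriented Braess graph with terminals aligned to $s$ and $t$, and second, to verify that the cost assignment genuinely reduces $G_{st}$ to the example. For the latter one must check that no combination of off-$W$ edges and branches of $W$ can form a cheaper $s$-$t$ path than $P_1,P_2,P_3$; choosing $M$ larger than $R$ plus the total magnitude of the negative prices handles this, since every path leaving $W$ then costs strictly more than the optimal value $3$. The sufficiency direction, by contrast, is essentially immediate once the general upper bound is in place.
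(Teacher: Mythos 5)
Your sufficiency direction is exactly the paper's: the general bound $\pi(p^*)\leq\min\{\ell_t(+\infty)-\ell_t(0),\,R-\ell_t(0)\}$ combined with Theorem~\ref{thm:sur}, and your justification of that bound is correct. The overall plan for necessity (contraposition plus an embedded Braess instance with costs $0$, $1$, $1$ on the appropriate branches, $R=3$, and prices $(3,-3,3)$) also matches the paper's proof. However, there is one genuine gap: the forbidden-subgraph characterization you invoke is misstated. It is \emph{not} true that a non--series-parallel two-terminal graph must contain a subgraph homeomorphic to the Braess graph \emph{with the source and sink of that subgraph identified with $s$ and $t$}. A counterexample is the series composition of a single edge $(s,a)$, a Braess graph from $a$ to $b$, and a single edge $(b,t)$: this graph is not series-parallel and is irredundant, yet $s$ has out-degree one, so no Braess homeomorph can be rooted at $s$. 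The correct statement, which the paper takes from Chen et al.~\cite{ChDiHu15}, is the existence of an ``$s$-$t$ paradox'': an $s$-$t$ path $P_1$ through vertices $a,u,v,b$ with $s\preceq a\prec u\prec v\prec b\preceq t$, an $a$-$v$ path $P_2$ and a $u$-$b$ path $P_3$ meeting $P_1$ only at their endpoints. That is, the Wheatstone structure sits between two internal terminals $a$ and $b$ that are joined to $s$ and $t$ by (possibly nontrivial) segments of $P_1$.

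This gap is repairable with a small modification you already half-anticipate: take $W$ to be the whole $s$-$t$ paradox $P_1\cup P_2\cup P_3$, including the prefix $s\rightsquigarrow a$ and suffix $b\rightsquigarrow t$ at cost $0$, place the three priceable edges on the segments $a\to u$, $u\to v$, $v\to b$ of $P_1$, put cost $1$ on one edge of $P_2$ and one edge of $P_3$, and cost $+\infty$ outside $W$ (the paper uses $+\infty$ directly rather than a large finite $M$, which sidesteps your worry about paths that partially leave $W$). With that correction the three relevant routes reduce to the paths of Example~\ref{ex:BraessGraph} and the computation $PoP(\mathcal{M})=3/2$ goes through as you describe. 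As written, though, the object $W$ you build the model on need not exist, so the necessity argument does not yet close.
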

\begin{proof}
First, assume that $G$ is a series-parallel graph. Recall that $\pi(p^*)\leq R-\ell_t(0)$ and $\pi(p^*)\leq \ell_t(+\infty)-\ell_t(0)$, and thus the result follows from Theorem \ref{thm:sur}.

Second, assume that $G_{st}$ is not a series-parallel graph. We show that there is a model $\mathcal{M}=(G_{st},(c_e)_{e\in E},E_p,R)$ such that $PoP(\mathcal{M})>1$. Basically, if the graph is not series parallel one can find a subgraph
with the same structure as Example~\ref{ex:BraessGraph}.

We call a subgraph $G'$ of $G$ an $s$-$t$ paradox if $G'=P_1\cup P_2\cup P_3$ is the union of three paths $P_1,P_2,P_3$ with the following properties:
\begin{enumerate}
  \item[(i)] $P_1$ is an $s$-$t$ path going through distinct vertices $a,u,v,b$ such that $s\preceq_{P_1} a\prec_{P_1} u\prec_{P_1} v\prec_{P_1} b\preceq_{P_1} t$, where $(\prec_{P_1})$ denotes the order in which $P_1$ visits the nodes.
\item[(ii)] $P_2$ is an $a-v$ path with $V(P_2)\cap V(P_1)=\{a,v\}$.
\item[(iii)] $P_3$ is a $u-b$ path with $V(P_3)\cap V(P_1)=\{u,b\}$ and $V(P_3)\cap V(P_2)=\emptyset$.
\end{enumerate}
Chen et al.~\cite{ChDiHu15} prove that if $G_{st}$ is not series-parallel, then $G_{st}$ contains an $s$-$t$ paradox. Let $G_{st}'=(V',E')=P_1\cup P_2\cup P_3$ be an $s$-$t$ paradox contained in $G_{st}$. Let $e_1,e_2$ be the two outgoing edges from $a$ with $e_1\in P_2$ and $e_2\in P_1$, let $e_3$ be an outgoing edge from $u$ with $e_3\in P_1$, and let $e_4,e_5$ be the two incoming edges to $b$ with $e_4\in P_1$ and $e_5\in P_3$. Define $\mathcal{M}$ as follows: $c_e=1$ if $e\in\{e_1,e_5\}$, $c_e=0$ if $e\in E'\setminus\{e_1,e_5\}$ and $c_e=+\infty$ if $e\in E\setminus E'$, $E_p=\{e_2,e_3,e_4\}$, and $R=3$. See Figure \ref{fig:whe2} for an illustration.

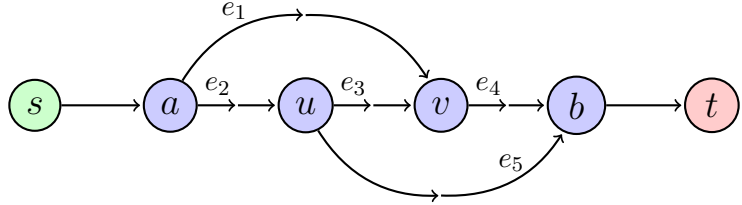
\begin{figure}[h]
\centering
\begin{tikzpicture}[scale=.6,->,shorten >=1pt,auto,node distance=2.5cm,
  thick,main node/.style={circle,fill=blue!20,draw,minimum size=20pt,font=\sffamily\Large\bfseries},source node/.style={circle,fill=green!20,draw,minimum size=15pt,font=\sffamily\Large\bfseries},dest node/.style={circle,fill=red!20,draw,minimum size=15pt,font=\sffamily\Large\bfseries}]
\node[source node] (source) at (-7.5,0) {$s$};
\node[main node] (a) at (-4.5,0) {$a$};
\node[main node] (u) at (-1.5,0) {$u$};
\node[main node] (v) at (1.5,0) {$v$};
\node[main node] (b) at (4.5,0) {$b$};
\node[dest node] (sink) at (7.5,0) {$t$};
\draw (source) to (a);
\draw (a) to [bend left=30] node[above] {$e_1$} (-1.5,2);
\draw (-1.5,2) to [bend left=30] (v);
\draw (a) to node[above] {$e_2$}  (-3,0);
\draw (-3,0) to (u);
\draw (u) to node[above] {$e_3$}  (0,0);
\draw (0,0) to (v);
\draw (u) to [bend right=30] (1.5,-2);
\draw (1.5,-2) to [bend right=30] node[above] {$e_5$} (b);
\draw (v) to node[above] {$e_4$}  (3,0);
\draw (3,0) to (b);
\draw (b) to (sink);
\end{tikzpicture}
\caption{An $s$-$t$ paradox.} \label{fig:whe2}
\end{figure}
Suppose that $p\in\mathbb{R}^{E_p}_+$, then $P(p)$ is either $P_1$, $P_2$ or $P_3$. However, in all three cases, $\pi(p)\leq 2$. Now, suppose that $p^*=(3,-3,3)$, then $\pi(p^*)=3$. Hence $PoP(\mathcal{M})=\frac{3}{2}>1$.
\end{proof}

\begin{corollary}
Stackelberg network pricing games in series-parallel graphs with a single follower are polynomial-time solvable.
\end{corollary}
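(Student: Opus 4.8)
The plan is to combine the structural content of Theorem~\ref{thm:sur} with the algorithmic observation of Labb\'{e} et al.~\cite{LaMaSa98} that, once the path induced by the follower is fixed, the pricing problem becomes polynomially solvable. The key simplification afforded by the single-follower series-parallel setting is two-fold: by Corollary~\ref{thm:sp} the negative price paradox does not occur, so it suffices to compute an optimal \emph{non-negative} price vector, which is then globally optimal; and Theorem~\ref{thm:sur} already pins down both the value of the optimal profit and the path the leader should induce.

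Concretely, I would proceed as follows. First, set $p_e=0$ for all $e\in E_p$ and compute a shortest $s$-$t$ path $P(0)$ together with its length $\ell_t(0)$; since the fixed costs satisfy $c_e\in\mathbb{R}_+$, this is a single non-negative-weight shortest-path computation. Likewise, compute $\ell_t(+\infty)$, the length of a shortest $s$-$t$ path avoiding all priceable edges, by deleting $E_p$ and running the same routine. By Corollary~\ref{thm:sp} together with Theorem~\ref{thm:sur}, the optimal profit equals $\pi(p^*)=\pi(p_+^*)=\min\{\ell_t(+\infty)-\ell_t(0),\,R-\ell_t(0)\}$, and is therefore computable in polynomial time.

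It remains to output an optimal price vector, not merely its value. The proof of Theorem~\ref{thm:sur} establishes that the profit $\min\{\ell_t(+\infty)-\ell_t(0),\,R-\ell_t(0)\}$ is already attained by a non-negative price vector that induces the particular path $P(0)$; hence $P(0)$ is an optimal path for the leader to induce. Fixing this path, the requirement that $P(0)$ remain a cheapest feasible $s$-$t$ path under the prices becomes a system of linear inequalities in the variables $(p_e)_{e\in E_p}$, and maximizing the (linear) profit subject to these constraints is exactly the polynomially solvable subproblem identified by Labb\'{e} et al.~\cite{LaMaSa98}. Solving it returns the desired prices.

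The only step requiring care---and the reason series-parallelity is essential---is the reduction to a \emph{single} candidate induced path. In a general graph the hardness of Stackelberg pricing stems precisely from having to search over exponentially many possible follower paths; here that search collapses because Theorem~\ref{thm:sur} guarantees that inducing the zero-price shortest path $P(0)$ is optimal. I would therefore invoke the optimality of $P(0)$ directly from the proof of Theorem~\ref{thm:sur}, rather than re-deriving it, and only then appeal to the fixed-path algorithm of \cite{LaMaSa98} to conclude.
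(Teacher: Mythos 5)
Your proposal is correct and follows essentially the same route as the paper, which derives the corollary from Theorem~\ref{thm:sur} (inducing the zero-price shortest path $P(0)$ with non-negative prices is optimal) combined with the observation of Labb\'{e} et al.~\cite{LaMaSa98} that the pricing problem for a fixed induced path is polynomial-time solvable. Your additional detail on computing $\ell_t(0)$ and $\ell_t(+\infty)$ and on the fixed-path linear program is a faithful elaboration of the paper's intended argument.
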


We would like to stress that Theorem \ref{thm:sur} is only valid for a single follower as the following example shows.

\begin{example}
Consider the network of Figure \ref{fig:sepa}. Each follower $k=1,2$ is defined by $(s^k,t)$, and $R^1=2$ and $R^2=1$.
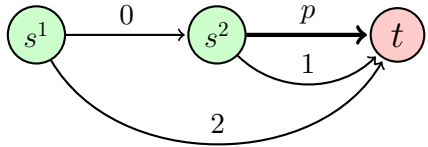
\begin{figure}[h]
\centering
\begin{tikzpicture}[scale=.6,->,shorten >=1pt,auto,node distance=2.5cm,
  thick,main node/.style={circle,fill=blue!20,draw,minimum size=20pt,font=\sffamily\Large\bfseries},source node/.style={circle,fill=green!20,draw,minimum size=15pt,font=\sffamily\Large\bfseries},dest node/.style={circle,fill=red!20,draw,minimum size=15pt,font=\sffamily\Large\bfseries}]
\node[source node,scale=0.8] (source) at (-8,0) {$s^1$};
\node[source node,scale=0.8] (a) at (-4,0) {$s^2$};
\node[dest node] (sink) at (0,0) {$t$};
\draw (source) to node[above] {$0$} (a);
\draw (source) to [bend right=60] node[above] {$2$} (sink);
\draw[ultra thick] (a) to node[above] {$p$} (sink);
\draw (a) to [bend right=45] node[above] {$1$} (sink);
\end{tikzpicture}
\caption{A series-parallel graph.} \label{fig:sepa}
\end{figure}

Clearly, it is impossible for the leader the extract 2 from follower 1 and 1 from follower 2.
\end{example}

Our last result of this subsection shows that the leader can extract all surplus from the follower by using negative prices if the shortest path with respect to fixed price edges is owned by the leader. In particular, these problems are polynomial solvable.

\begin{lemma}\label{lem:neg}
Let $\bar k=1$ and $P(0)$ denote a shortest $s$-$t$ path when the prices are all set to $0$. If $P(0)\subseteq E_p$, then
$\pi(p^*)=\min\{\ell_t(+\infty)-\ell_t(0),R-\ell_t(0)\}$.
\end{lemma}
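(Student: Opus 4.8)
The plan is to prove the two matching inequalities separately, with the upper bound coming essentially for free and all the work concentrated in the lower bound. As already recorded in the paper (and reproved by the argument used in Corollary~\ref{thm:sp}), for a single follower one always has $\pi(p^*)\le R-\ell_t(0)$ and $\pi(p^*)\le \ell_t(+\infty)-\ell_t(0)$, with no hypothesis on $P(0)$: any induced path has fixed cost at least $\ell_t(0)$, its total cost to the follower is at most $\min\{R,\ell_t(+\infty)\}$ (both the reservation ``edge'' of cost $R$ and any priceable-edge-free path of cost $\ell_t(+\infty)$ are always available to the follower), and the profit equals total cost minus fixed cost. Writing $T=\min\{\ell_t(+\infty),R\}$ and $B=T-\ell_t(0)$, this gives $\pi(p^*)\le B$. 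The entire content of the lemma is thus the reverse inequality $\pi(p^*)\ge B$, and since inducing any single fixed path is one admissible strategy, it suffices to exhibit prices under which the follower chooses $P(0)$ with profit exactly $B$.

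For the lower bound I would invoke the shortest-path characterization of Labb\'e et al.~\cite{LaMaSa98}, exactly as in the alternative proof of Theorem~\ref{thm:sur}. The maximum profit achievable subject to the follower choosing $P(0)$ equals $\min\{R,\mu\}-\ell_t(0)$, where $\mu$ is the length of a shortest $s$-$t$ path in the auxiliary graph $G'$ obtained from $G$ by deleting all priceable edges and reversing each edge of $P(0)\setminus E_p$ with negated cost. This is exactly where the hypothesis enters: since $P(0)\subseteq E_p$ we have $P(0)\setminus E_p=\emptyset$, so no edge is reversed and $G'$ is simply $G$ with the priceable edges removed. Hence $\mu=\ell_t(+\infty)$, and the maximum profit for inducing $P(0)$ is $\min\{R,\ell_t(+\infty)\}-\ell_t(0)=B$. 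Combined with the upper bound this yields $\pi(p^*)=B$. (If $\ell_t(+\infty)=+\infty$, i.e.\ $G'$ has no $s$-$t$ path, then no priceable-edge-free detour exists, the leader can charge the follower up to $R$ along $P(0)$, and again $B=R-\ell_t(0)$.)

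If one prefers a self-contained argument in place of the citation, I would construct the prices explicitly and argue feasibility directly, which also pinpoints the main obstacle. Write $P(0)=(s=v_0,e_1,v_1,\dots,e_k,v_k=t)$ with every $e_i\in E_p$, let $L_v=\ell_v(+\infty)$ and $R_v$ be the cheapest priceable-edge-free distances from $s$ to $v$ and from $v$ to $t$, set the cumulative target cost $g(v_i)=\min\{L_{v_i},T\}$ (so $g(v_0)=0$, $g(v_k)=T$), and put $p_{e_i}=g(v_i)-g(v_{i-1})-c_{e_i}$ together with price $+\infty$ on all priceable edges off $P(0)$. These telescope to give $P(0)$ total cost exactly $T\le R$ and profit $\sum_i p_{e_i}=T-\ell_t(0)=B$, with some $p_{e_i}<0$. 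The genuine difficulty is verifying that $P(0)$ is then a shortest path, i.e.\ that no deviation undercuts it; this is precisely the step that Labb\'e et al.'s theorem packages away. The two controlling inequalities are $g(v_i)\le L_{v_i}$ (merging onto the discounted segment late, via a cheap external path to $v_i$, costs $L_{v_i}+(T-g(v_i))\ge T$) and $g(v_i)\ge T-R_{v_i}$ (leaving $P(0)$ early for an external tail costs $g(v_i)+R_{v_i}\ge T$); the latter holds because $L_{v_i}+R_{v_i}$ is the cost of a priceable-edge-free $s$-$t$ walk through $v_i$, hence at least $\ell_t(+\infty)\ge T$. Promoting these single merge/leave bounds to arbitrary interleavings of detours and on-path segments is the only delicate point, and follows by treating $g$ (extended by $\min\{L_v,T\}$ off the path) as a feasible potential and telescoping along any competing $s$-$t$ walk. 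I would present the Labb\'e-based argument as the main proof and relegate this explicit construction to a remark.
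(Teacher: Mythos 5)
Your proof is correct, but it reaches the lower bound by a different route than the paper. The paper fixes $P(0)$ as the induced path, writes the leader's pricing problem as a linear program whose constraints are exactly the shortest-path potential inequalities, dualizes, and reads off the dual optimum as $\min\{\ell_t(+\infty),R\}-\ell_t(0)$ by recognizing the dual as a shortest-path problem on $E\setminus E_p$ together with the reservation arc $y_{(s,t)}$; strong duality finishes. Your main argument instead invokes the auxiliary-graph formula of Labb\'e et al.~\cite{LaMaSa98} for the maximum revenue extractable on a fixed induced path; since $P(0)\subseteq E_p$ no arcs are reversed, the auxiliary graph is just $G$ with the priceable edges deleted, and the formula collapses to the same value --- so this part is the paper's duality argument packaged as a citation (which is legitimate, as the paper itself relies on that result of \cite{LaMaSa98} elsewhere). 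What is genuinely different, and worth keeping, is your self-contained construction: the prices $p_{e_i}=g(v_i)-g(v_{i-1})-c_{e_i}$ with $g(v)=\min\{\ell_v(+\infty),T\}$ and $T=\min\{\ell_t(+\infty),R\}$ form an explicit primal certificate, and the observation that $g$, extended to all of $V$, is a feasible potential (one checks $g(v)\le g(u)+c_{(u,v)}$ for every non-priceable edge by splitting on whether $g(u)$ equals $T$ or $\ell_u(+\infty)$, with equality along $P(0)$ by construction) disposes of all deviating paths at once, including arbitrary interleavings of detours and on-path segments, with no duality needed. That version is more informative --- it exhibits which edges receive negative prices and by how much --- at the cost of a few lines of verification, while the LP formulation is the one that generalizes. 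The upper bound $\pi(p^*)\le\min\{\ell_t(+\infty),R\}-\ell_t(0)$ is treated as known in both your write-up and the paper, which is fine since it is recorded in the proof of Corollary~\ref{thm:sp}; I would present the explicit construction as the proof and relegate the citation to a remark, rather than the other way around.
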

\begin{proof}
We will show that the leader can extract all surplus from the follower. Notice that we can assume that $p_e=+\infty$ for all $e\in E_P\setminus P(0)$. The profit maximization problem of the leader is
 \[
\begin{array}{rcccl}
\max	\limits_{(p_e)_{e\in E_p},(\ell_v(p))_{v\in V}}		   & \sum_{e\in P(0)}p_e & & &\\
			   & \ell_v(p)-\ell_u(p)-p_e &=& c_e &\forall e=(u,v)\in P(0)\\
			   & \ell_v(p)-\ell_u(p) &\leq& c_e&\forall e=(u,v)\in E\setminus E_p\\
			   & p_e &=& +\infty&\forall e\in E_p\setminus P(0)\\
			   & \ell_t(p) &\leq& R.&
			   \end{array}
\]
The constraints guarantee that $P(0)$ is a shortest path given price vector $p$.

Define $\delta^-(v)=\{(u,v)\in E\text{ for some }u\in V\}$ and $\delta^+(v)=\{(v,w)\in E\text{ for some }w\in V\}$. The dual of the above linear program is
\[
\begin{array}{rcccl}
\min\limits_{(y_e)_{e\in E},y_{(s,t)}} & \sum_{e\in E} y_e\cdot c_e+y_{(s,t)}\cdot R&&& \\
			   & \sum_{e\in \delta^-(v)}y_e - \sum_{e\in \delta^+(v)}y_e &=&0& \forall v\in V\setminus\{t\}\\
			   & \sum_{e\in \delta^-(t)}y_e+y_{(s,t)}- \sum_{e\in \delta^+(t)}y_e&=&0&\\
			   & y_e &=& -1 & \forall e\in P(0)\\
			   & y_e&\geq& 0 & \forall e\in E\setminus E_p\\
			   & y_e &=& 0 & \forall e\in E_p\setminus P(0)\\
			   & y_{(s,t)}&\geq& 0, &
\end{array}
\]
which is equivalent to
\[
\begin{array}{rcccl}
\min\limits_{(y_e)_{e\in E\setminus E_p},y_{(s,t)}} & \sum_{e\in E\setminus E_p} y_e\cdot c_e+y_{(s,t)}\cdot R-\sum_{e\in P(0)} c_e &&& \\
			   & \sum_{e\in \delta^-(v)\cap (E\setminus E_p)}y_e - \sum_{e\in \delta^+(v)\cap (E\setminus E_p)}y_e &=&0 & \forall v\in V\setminus\{s,t\}\\
			   & \sum_{e\in \delta^-(s)\cap (E\setminus E_p)}y_e - \sum_{e\in \delta^+(s)\cap (E\setminus E_p)}y_e &=&-1\\
			   & \sum_{e\in \delta^-(t)\cap (E\setminus E_p)}y_e+y_{(s,t)} - \sum_{e\in \delta^+(t)\cap (E\setminus E_p)}y_e&=&1\\
			   & y_e&\geq& 0 & \forall e\in E\setminus E_p\\
			   & y_{(s,t)}&\geq& 0.&
\end{array}
\]
The optimal solution of the second dual either picks a shortest path in the graph with edge set $E\setminus E_p$, or sets $y_e=0$ for all $e\in E$ and $y_{(s,t)}=1$. The result then follows by strong duality.
\end{proof}

The result in Lemma \ref{lem:neg} only applies to a single follower.
\begin{example}
Consider the network in Figure \ref{fig:tri}, where $K=\{1,2,3\}$. Let $R^1=1$ and $R^2=R^3=2$. Observe that the leader owns a shortest path from source to sink whenever $p_e=0$ for all $e\in E$ for each follower $k\in K$. 
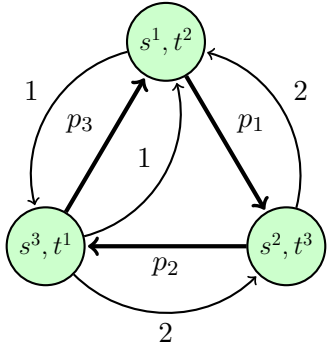
\begin{figure}[h]
\centering
\begin{tikzpicture}[scale=.8,->,shorten >=1pt,auto,node distance=2.5cm,
  thick,main node/.style={circle,fill=blue!20,draw,minimum size=20pt,font=\sffamily\Large\bfseries},source node/.style={circle,fill=green!20,draw,minimum size=25pt,font=\sffamily\Large\bfseries},dest node/.style={circle,fill=red!20,draw,minimum size=15pt,font=\sffamily\Large\bfseries}]

  \node[source node,scale=0.7] (3) at (0,0) {$s^3,t^1$};
  \node[source node,scale=0.7] (2) at (4,0) {$s^2,t^3$};
  \node[source node,scale=0.7] (1) at (2,3.4){$s^1,t^2$};
   \draw[ultra thick] (1) to node[above right] {$p_1$} (2);
   \draw[ultra thick] (2) to node[below] {$p_2$} (3);
   \draw[ultra thick] (3) to node[above left] {$p_3$} (1);
   \draw (1) to [bend right = 45] node[above left] {$1$} (3);
   \draw (2) to [bend right = 45] node[above right] {$2$} (1);
   \draw (3) to [bend right = 45] node[below] {$2$} (2);
   \draw (3) to [bend right = 45] node[above left] {$1$} (1);
\end{tikzpicture}
\caption{A triangle network.}\label{fig:tri}
\end{figure}

The total surplus of all followers equals $\sum_{k\in K}R^k-\ell^k_t(0)=1+2+2=5$. However, the following set is empty and thus the leader cannot extract all surplus:
\[\{(p_1,p_2,p_3)\in\mathbb{R}^3\mid p_1+p_2=1,\;p_2+p_3=2,\;p_1+p_3=2,\;p_3\leq 1\}.\]
\end{example}

\subsection{Matroid followers}

A more general of description of the model we have been looking at is the following. A Stackelberg pricing game is a tuple $\mathcal{M}= \left(E, (c_e)_{e\in E}, E_p, K, (\mathcal{S}^k)_{k\in K},(R^k)_{k\in K}\right)$,
where $E$ is the set of resources, $c_e\in\mathbb{R}$ is a fixed cost for each $e\in E$, $E_p\subseteq E$ is the set of priceable resources, $K=\{1,\ldots,\bar k\}$ is the set of followers, $\mathcal{S}^k\subseteq 2^E$ is the set of strategies and $R^k\in\mathbb{R}_+$ is the reservation value for each $k\in K$.

So far we have studied
the case where $E$ is the set of edges in a network and $\mathcal{S}^k$ is the set of $s^k$-$t^k$ paths for each $k\in K$. Now, we assume that for each $k\in K$ and all $S, S'\in \mathcal{S}^k$ with $S\neq S'$, $S\not\subset S'$. That is, we assume that set system $(E,\mathcal{S}^k)$ is a \textit{clutter} for each $k\in K$.

Given a vector of prices $p\in\mathbb{R}^{E_p}$, the total costs of a set $S\in\mathcal{S}^k$ for follower $k\in K$ are defined by
\[c_S=\sum_{e\in S}c_e+\sum_{e\in S\cap E_p}p_e.\]
For each $p$, each follower chooses a set $S^k(p)\in\mathcal{S}^k$ with $c_{s^k(p)}\leq R^k$ so as to minimize total costs, and if no such set exists, chooses $S^k(p)=\emptyset$. The reservation value $R^k$ for follower $k\in K$ can be thought of as a feasible unpriceable resource with fixed cost $R^k$.

A \textit{matroid} is a tuple $M=(E,\mathcal{I})$, where $E$ is a finite set, called the ground set, and $\mathcal{I}\subseteq 2^E$ is a non-empty family of subsets of $E$, called independent sets, such that: (1) $\emptyset\in\mathcal{I}$, (2) if $X\in\mathcal{I}$ and $Y\subseteq X$, then $Y\in\mathcal{I}$, and (3) if $X, Y\in\mathcal{I}$ with $|X|>|Y|$, then there exists an $e\in X\setminus Y$ such that $Y\cup\{e\}\in \mathcal{I}$ (this property is called the augmentation property). The inclusion-wise maximal independent sets of $\mathcal{I}$ are called bases of matroid $M$. See \cite{Ox92,Sc03,We10} for more information on matroids.

We now consider the case in which instead of a path, each follower $k$
chooses a minimum cost basis of a given matroid $M^k=(E,\mathcal{I}^k)$, which is
a generalization of the setting in which
followers choose spanning trees of a given graph~\cite{CaDeFiJoLaNeWe11}.
We prove that in this case there is no need for negative prices.

\begin{theorem}
	Let $\mathcal{M}$ be a Stackelberg game where $\mathcal{S}^k$ is the
	base set of a matroid $M^k$ for each follower $k\in K$. Then $PoP(\mathcal{M})=1$.
	\label{th:MatroidPoP}
\end{theorem}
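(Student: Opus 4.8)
The plan is to show that the unrestricted optimum is always matched by a non-negative price vector, and thus $PoP(\mathcal{M})=1$. Writing $\pi(p)=\sum_{e\in E_p} n_e(p)\,p_e$, where $n_e(p)=|\{k: e\in S^k(p)\}|$ counts the followers whose optimal basis uses $e$, makes transparent that a priceable resource with $p_e<0$ contributes negatively and can only be worthwhile if it boosts the usage of other, positively priced resources. The matroid exchange property is precisely what rules out such complementarities, and this is the mechanism I would exploit throughout. For each follower $k$ I introduce two benchmarks: $W_0^k$, the minimum fixed cost $c$ of a basis of $M^k$, and $W_\infty^k$, the minimum cost of a basis of $M^k$ using no priceable resource (with $W_\infty^k=+\infty$ if $\mathcal{S}^k$ has no such basis).

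First I would prove a per-follower upper bound valid for arbitrary prices. Since the basis attaining $W_\infty^k$ uses no priceable resource, its cost does not depend on $p$, so it is always a competitor; as follower $k$ picks a minimum-weight basis $B^k$ of cost at most $R^k$, we get $w^k(B^k)\le \min\{R^k,W_\infty^k\}$, while its fixed cost obeys $c(B^k)\ge W_0^k$. Hence the revenue extracted from $k$ is $w^k(B^k)-c(B^k)\le \min\{R^k,W_\infty^k\}-W_0^k$, regardless of the sign of the prices.

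For a single follower this bound is tight with non-negative prices, which already settles that case. Starting from $p\equiv 0$, the follower uses a minimum-cost basis $B_0$ with $c(B_0)=W_0$. I would then raise the prices of the priceable edges of $B_0$, at each step increasing an edge whose fundamental cocircuit in $B_0$ is not yet tight, so that $B_0$ stays a minimum-weight basis; matroid exchange guarantees that the total weight of $B_0$ can be pushed all the way to $\min\{R,W_\infty\}$ before any priceable-free competitor or the reservation value becomes binding. All prices used are non-negative and the resulting profit equals the bound $\min\{R,W_\infty\}-W_0$, mirroring the extraction formula of Theorem~\ref{thm:sur}.

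For several followers the per-follower bounds are in general not simultaneously achievable, so I cannot simply match the upper bound; instead I would start from an unrestricted optimum $p^*$ and transform it. Fixing the bases $B^k=S^k(p^*)$, the prices that keep each $B^k$ a minimum-weight basis of cost at most $R^k$ form a polyhedron, and maximizing $\sum_e n_e\,p_e$ over it is a linear program for which $p^*$ is feasible; the theorem reduces to showing this LP admits a \emph{non-negative} optimal solution. This last step is the main obstacle and the only delicate point: naively raising a negative price $p_e$ to $0$ can push a follower with a tight reservation constraint over budget and make it abandon (a rank-two example already shows the loss can be real), so negativity must be removed by a simultaneous rebalancing that lowers other priceable prices on the affected bases. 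I would carry this out by a coordinate-repair argument: if an optimal $p$ has $p_e<0$, a binding competing-basis or budget constraint exhibits, via a fundamental circuit/cocircuit exchange, another priceable resource whose price can be decreased to fund an increase of $p_e$ without changing any follower's chosen basis or violating feasibility, and the absence of complementarities in matroids ensures this exchange never decreases $\sum_e n_e\,p_e$. Iterating removes all negative coordinates, yielding a non-negative optimal solution and hence $PoP(\mathcal{M})=1$.
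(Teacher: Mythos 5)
There are two genuine gaps. First, your single-follower step rests on a false claim: the leader cannot in general extract the full surplus $\min\{R,W_\infty\}-W_0$ from a matroid follower with non-negative prices (nor with arbitrary prices). Take the uniform matroid $U_{2,4}$ on $E=\{a,b,f_1,f_2\}$ with $E_p=\{a,b\}$, $c_a=c_b=0$, $c_{f_1}=1$, $c_{f_2}=2$ and $R$ large. Then $W_0=0$ and $W_\infty=3$, but for $\{a,b\}$ to remain a minimum-weight basis you need $p_a\leq 1$ \emph{and} $p_b\leq 1$ (both exchange constraints go through the same cheap element $f_1$), so the profit is at most $2<W_\infty-W_0$; no other basis does better. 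Your "push the weight of $B_0$ all the way to $\min\{R,W_\infty\}$" step fails precisely because the per-element cocircuit constraints are not independent and their sum can saturate strictly below $W_\infty-W_0$. The theorem asserts only $PoP(\mathcal{M})=1$, not full surplus extraction (which is a feature of Theorem~\ref{thm:sur} for series-parallel shortest-path instances, not of matroids), so matching this upper bound cannot be the route. Second, for multiple followers you yourself flag that the decisive step --- removing a negative coordinate by a rebalancing that never decreases $\sum_e n_e p_e$ --- is "the main obstacle", and it is left as an assertion; moreover, your insistence on keeping every follower's chosen basis $B^k$ fixed is not obviously without loss: it may be that no non-negative price vector supports exactly those bases, so the polyhedron you optimize over need not contain a non-negative point of the required profit.

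The paper's proof avoids both issues by \emph{allowing} followers to change their basis. Its key lemma (proved by running the greedy algorithm in parallel on the two weight vectors) states that if the weight of a single element $e$ is increased, the new optimal basis differs from the old one by at most the single swap $A-e+f$. Then one raises each negative price $p^*_e<0$ to $0$, one at a time: a follower who keeps her basis now pays $0$ instead of $p^*_e<0$ on $e$, and a follower who swaps $e$ for $f$ pays either a fixed-cost element (the leader stops losing $|p^*_e|$) or a priceable $f$ with $p^*_f\geq p^*_e$, so the profit never decreases. If you want to salvage your LP framing, you would need to prove an analogue of this one-swap exchange lemma and let the supported bases move as prices change; as written, the proposal does not establish the theorem.
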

We first prove a lemma and then proceed to prove Theorem~\ref{th:MatroidPoP}.
These proofs strongly rely on the fact that matroids are exactly those structures for which the greedy algorithm finds the optimal solution. Greedy algorithms are very simple algorithms in which in each step a local optimal choice is selected. This means that we can assume that for finding a minimum cost base, each follower considers sequentially the next cheapest resource. If adding this resource to the set is feasible, then the resource is selected. If not, the resource is not selected.

\begin{lemma}
	Let $M=(E,\mathcal{I})$ be a matroid with weights $w:E\rightarrow \mathbb{R}$, and
	$A$ a minimum weight base. If $w'$ is obtained by increasing the weight of
	an element $e\in E$, and if $A$ is not optimal under $w'$, then there is an
	element $f\in E$ such that $A-e+f$ is an optimal base under $w'$.
	\label{lem:exchange_matroid}
\end{lemma}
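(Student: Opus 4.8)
The plan is to prove this via the symmetric (strong) base exchange property of matroids, exploiting the fact that the weight change from $w$ to $w'$ is supported on the single element $e$. Throughout I write $A-e+f$ for $(A\setminus\{e\})\cup\{f\}$ and let $\Delta\geq 0$ denote the increase, so $w'$ agrees with $w$ everywhere except $w'(e)=w(e)+\Delta$.

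First I would reduce to the case $e\in A$. If $e\notin A$, then increasing $w(e)$ leaves the weight of every base avoiding $e$ unchanged and weakly increases the weight of every base containing $e$; since $A$ is a minimum weight base avoiding $e$, it stays minimum under $w'$, contradicting the hypothesis that $A$ is not optimal. Hence $e\in A$. Next I fix any minimum weight base $B$ under $w'$ and claim $e\notin B$. Indeed, if $e\in B$ then $w'(A)=w(A)+\Delta$ and $w'(B)=w(B)+\Delta$; since $A$ is optimal under $w$ we have $w(A)\leq w(B)$, hence $w'(A)\leq w'(B)$, so $A$ would be optimal under $w'$ as well, a contradiction. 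Thus $e\in A\setminus B$.

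The core step is then to apply the symmetric exchange property to the bases $A,B$ and the element $e\in A\setminus B$, obtaining $f\in B\setminus A$ such that both $A-e+f$ and $B-f+e$ are bases. Set $A'=A-e+f$ and $B'=B-f+e$. The idea is to chain two optimality inequalities across the single pair $\{e,f\}$. Since $w$ and $w'$ agree off $e$ and $f\neq e$, a direct computation gives $w'(A')=w(A)-w(e)+w(f)$, while $w(B')=w(B)+w(e)-w(f)$. Because $A$ is a minimum weight base under $w$ and $B'$ is a base, $w(A)\leq w(B')$, which rearranges to $w(A)-w(e)+w(f)\leq w(B)$. As $e\notin B$ we have $w(B)=w'(B)$, so this reads exactly $w'(A')\leq w'(B)$; since $B$ is optimal under $w'$, equality holds and $A'=A-e+f$ is an optimal base under $w'$, which is the desired $f$.

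The main obstacle is selecting the exchange partner correctly. The ordinary (one-sided) exchange axiom would only guarantee that $A-e+f$ is a base, which is insufficient: I specifically need the companion set $B-f+e$ to be a base as well, so that optimality of $A$ under $w$ can be invoked on it to pin down $w'(A')=w'(B)$. Supplying this simultaneous exchange is exactly the role of the symmetric base exchange property, and it is the crux of the argument; the weight bookkeeping afterward is routine once one observes that $A'$ and $B'$ differ from $A$ and $B$ by the same pair $\{e,f\}$ and that only the weight of $e$ has moved.
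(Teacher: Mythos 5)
Your proof is correct, but it takes a genuinely different route from the paper. The paper argues algorithmically: it runs the greedy algorithm in parallel under $w$ and under $w'$, observes that the two sorted lists agree up to the position of $e$, and tracks (via the augmentation property) that the two outputs can differ in only one element $f$, concluding that the $w'$-greedy output is $A-e+f$. You instead reduce to $e\in A\setminus B$ for some $w'$-optimal base $B$ and invoke the \emph{symmetric} base exchange theorem (Brylawski/Greene/Woodall) to get $f\in B\setminus A$ with both $A-e+f$ and $B-f+e$ bases, then close the argument with the two optimality inequalities $w(A)\leq w(B-f+e)$ and $w'(A-e+f)\leq w'(B)$. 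Your bookkeeping is right (the reduction to $e\in A$, the observation that $e\notin B$, and the identity $w'(A-e+f)=w(A)-w(e)+w(f)$ all check out), and you correctly identify that the one-sided exchange axiom would not suffice, since you need $B-f+e$ to be a base in order to compare it against $A$ under $w$. The trade-off: your argument is shorter and more rigorous than the paper's somewhat informal parallel-greedy narrative (which glosses over tie-breaking and over why exactly one new element is picked up), but it leans on the symmetric exchange theorem, a deeper matroid fact; the paper's proof is self-contained given only the augmentation axiom and the correctness of the greedy algorithm, and it meshes with the greedy framing used in the subsequent proof of Theorem~\ref{th:MatroidPoP}.
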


\begin{proof}
	Assume $A$ is not optimal under $w'$, and that we run in parallel the greedy algorithm
	on both instances. Call G1 the greedy algorithm running with weights $w$ and
	G2 the greedy algorithm running with weight $w'$. When looking at the ordered lists, it is clear that
	both lists are equal up to element $e$. So the partial solutions up to element $e$ are the same.
	Now, when G1 finds $e$, it adds it to the solution, but G2 sees
	the next element. Since $A$ is not optimal for $w'$, G2 must find
	an element $f\notin A$ and add it to the partial solution before it reaches
	$e$ in the list induced by $w'$. Besides $e$ and
	$f$, the two algorithms have added the same resources so far. Therefore,
	the two partial solutions have the same size, and then, when one of the
	two algorithms can add a new item, the other can as well, because of the
	augmentation property, whereas if a new item cannot be added for one algorithm, it also cannot be added for the other algorithm, again by the augmentation property. With this observation we can conclude that, at the end,
	the only difference between the two solutions is elements $e$ and $f$.
\end{proof}

\begin{proof}[Proof of Theorem~\ref{th:MatroidPoP}]
	Assume by contradiction that in $p^*$, the optimal solution of the leader,
	there is a negative price $p^*_e<0$	for some resource $e\in E_p$. For each $k\in K$, let $S^k(p^*)$ be the
	optimal strategy for agent
	$k$, given $p^*$, obtained using the greedy algorithm.
	Let now $p'$ be equal to $p^*$ in every resource but $e$, in which
	$p'_e=0$. 
	
	If $S^k(p^*)$ is not optimal under $p'$, by Lemma~\ref{lem:exchange_matroid}
	there is an element $f\in E$ such that $S_k(p^*)-e+f$ is optimal for
	the follower $k$.
	If $f\in E_f$, then the profit for the leader increases because he was
	loosing value in $e$, and if $f\in E_p$, the optimality of $S_k(p^*)$ implies
	that $p^*_e < p^*_f$ and then the leader also increases his profit.
	We conclude that there is a positive solution $p^*_+$ such that
	$\pi(p^*)\leq \pi(p^*_+)$.
\end{proof}

\subsection{Clutters}
Assume that $\bar k=1$. We say that $\left(E, \mathcal{S}\right)$ admits the negative price paradox if there is a model  $\mathcal{M}= \left(E, (c_e)_{e\in E}, E_p, \mathcal{S},R\right)$ such that $PoP(\mathcal{M})>1$. We say that $(E,\mathcal{S})$ is immune to the negative price paradox if there is no model $\mathcal{M}= \left(E, (c_e)_{e\in E}, E_p, \mathcal{S},R\right)$ that admits the negative price paradox. We try to determine the structural
properties of clutters that are immune to the paradox, as we did for the network pricing games. Although we are not
able to give a complete characterization, we show a necessary condition and a closely related sufficient condition.

\begin{proposition} \label{pro:nec}
If $\left(E, \mathcal{S}\right)$ admits the negative price paradox, then there are $a,b,c\in E$ and $A,B,C\in\mathcal{S}$ such that $a\in A\setminus(B\cup C)$, $b\in (A\cap B)\setminus C$, and $c\in (A\cap C)\setminus B$.
\end{proposition}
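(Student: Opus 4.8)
The plan is to fix an optimal price vector $p^*$, let $A=S(p^*)$ be the set the follower buys, and extract the triple from the structure of the sets competing with $A$. First I would show that $A$ must contain a priceable resource carrying a strictly negative price: if every $e\in A\cap E_p$ had $p^*_e\ge 0$, then replacing $p^*$ by $\bar p_e=\max\{p^*_e,0\}$ leaves $c_A$ unchanged while only raising the cost of every other set, so $A$ is still chosen, now under non-negative prices and with profit $\ge\pi(p^*)$, contradicting $PoP(\mathcal M)>1$ (note $\pi(p^*)>\pi(p_+^*)\ge 0$). Fix such an edge $a\in A\cap E_p$ with $p^*_a<0$.

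Next I would reduce to a linear program over the prices of $P:=A\cap E_p$. Setting the price of every priceable resource outside $A$ to $+\infty$ does not change $\pi(p^*)$ but kills every set using such a resource, so the only relevant competitors are the sets $S\in\mathcal S$ with $S\cap E_p\subseteq P$. Writing $T_S:=P\setminus S$ and $\beta_S:=\sum_{e\in S}c_e-\sum_{e\in A}c_e$, inducing $A$ amounts to $\sum_{e\in T_S}p_e\le \beta_S$ for every competitor $S$, together with the reservation constraint $\sum_{e\in P}p_e\le R-\sum_{e\in A}c_e$; the objective is $\max\sum_{e\in P}p_e$. The unrestricted optimum of this LP equals $\pi(p^*)$, while its non-negative variant is at most $\pi(p_+^*)$, so the paradox yields a strict gap between the two. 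The purpose of this reformulation is that the desired conclusion is precisely a \emph{crossing pair}: if two competitors $B,C$ satisfy that $T_B\cap T_C$, $T_B\setminus T_C$ and $T_C\setminus T_B$ are all non-empty, then picking $a\in T_B\cap T_C$, $c\in T_B\setminus T_C$ and $b\in T_C\setminus T_B$ gives $a\in A\setminus(B\cup C)$, $b\in(A\cap B)\setminus C$ and $c\in(A\cap C)\setminus B$. Since $T_{\mathrm{res}}=P$ contains every $T_S$, the reservation is nested with everything and can never be part of a crossing pair, so $B,C$ are automatically genuine members of $\mathcal S$.

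I would then pass to LP duality. Because the primal variables are free in sign, the dual of the unrestricted LP is a minimum-cost \emph{exact} fractional cover, $\min\sum_S\beta_S y_S$ subject to $\sum_{S:\,e\in T_S}y_S=1$ for all $e\in P$ and $y\ge 0$; imposing $p\ge 0$ relaxes these equalities to $\ge 1$, i.e.\ to an ordinary minimum-cost fractional cover. A gap thus means that the cheapest exact cover is strictly dearer than the cheapest cover. The key lemma I would prove is that if $\{T_S\}$ is cross-free (laminar) this cannot occur: using that the binding bounds are non-negative, any fractional cover can be converted into an exact cover of no greater cost by a push-down along the laminar forest (whenever an element is over-covered, shave weight off the inclusion-minimal set covering it and absorb the deficit along its chain of ancestors). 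Hence a gap forces two competitors whose complement sets cross, which is exactly the structure we want.

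The step I expect to be the main obstacle is justifying the sign condition $\beta_S\ge 0$ on the relevant competitors, that is, ruling out that the gap is an artefact of the non-negative LP being infeasible rather than of a genuine crossing. A competitor with $\beta_S<0$ overlapping $A$ forces a negative price merely to render $A$ feasible, and the laminar push-down then breaks down. The resolution must invoke the \emph{global} optimality of $A=S(p^*)$, not just optimality among vectors inducing $A$: were such a low fixed-cost competitor present, the leader could instead induce it with non-negative prices and earn at least $\pi(p^*)$, contradicting the paradox. Turning this into a clean reduction, and disposing of the degenerate cases in which the only tight constraints come from the reservation or from competitors disjoint from $A$, is where the real work lies.
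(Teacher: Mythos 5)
Your overall strategy --- reduce to the LP for inducing $A$ over the prices of $P=A\cap E_p$, dualize, and show that a gap between the exact fractional cover and the ordinary fractional cover forces two competitors whose complements $T_B,T_C$ cross --- is a genuinely different route from the paper's, and your translation from a crossing pair to the triple $(a,b,c)$ is correct. But the argument is not complete, and the hole is exactly where you locate it. The laminar push-down (equivalently, the primal swap that trades $\varepsilon$ from a positively priced element to a negatively priced one inside the same minimal tight set) needs $\beta_S\ge 0$ for the relevant tight competitors: that is what guarantees a positively priced partner to swap against, and it is also what makes the non-negative LP feasible at all. Your proposed fix --- if some competitor $S$ has $\beta_S<0$, the leader could induce $S$ with non-negative prices and earn at least $\pi(p^*)$ --- is not justified: from $c_A(p^*)\le c_S(p^*)$ and $\beta_S<0$ one only gets $\sum_{e\in T_S}p^*_e\le\beta_S<0$, i.e., the leader is subsidizing $T_S$; to sell $S$ instead he must still beat every competitor $S'$ with $S'\cap T_S=\emptyset$, and nothing bounds $c_{S'}$ from below by $c_S$ (both are merely $\ge c_A(p^*)$). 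So the case you defer is not routine cleanup but the crux, and as written the proposal does not establish the proposition. (A secondary, fixable point: the push-down must be shown to terminate, e.g.\ by selecting among optimal price vectors one minimizing $\sum_e\max\{0,-p_e\}$.)

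For contrast, the paper's proof is entirely elementary and sidesteps the LP machinery. Starting from an optimal $p^*$ selling $A$, it picks $a\in A$ with $p^*_a<0$ and $b\in A$ with $p^*_b>0$ and perturbs $p_a\mapsto p_a+\varepsilon$, $p_b\mapsto p_b-\varepsilon$; since this preserves $c_A$, if it never destroyed the optimality of $A$ one could drive all prices non-negative, so for some such pair there is a tied competitor $B$ with $b\in B$, $a\notin B$. A third priceable $c\in A\setminus B$ with $p^*_c>0$ must exist (else raising $p_a$ to $0$ and selling $B$ would beat $p^*$), and the same perturbation applied to $(a,c)$ produces $C$, with $b\notin C$ after possibly iterating. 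This manufactures the witnesses $B$ and $C$ directly --- they are precisely your crossing pair --- while the optimality of $p^*$ absorbs all the sign and feasibility issues that your LP route leaves open.
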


\begin{proof}
	Assume $\left(E, \mathcal{S}\right)$ admits the negative price paradox. Let $\mathcal{M}= \left(E, (c_e)_{e\in E}, E_p, \mathcal{S},R\right)$ be a model such that $PoP(\mathcal{M})>1$. Let $p^*$ be the optimal solution for the leader, and let $A$ be the optimal
	strategy for the follower, given $p^*$. There must be a priceable resource
	$a\in A$ such that $p^*_a<0$, since $PoP(\mathcal{M})>1$.
	Clearly, there must also be a priceable resource $b\in A$  with
	$p_b^*>0$, because otherwise the profit would be negative.

	Suppose now that the leader increases $p_a$ by a small value $\varepsilon>0$
	and decreases $p_b$ by the same $\varepsilon$. This will preserve the
	total cost of $A$ and then this cannot be done for with every resource with negative
	tolling unless $PoP(\mathcal{M})=1$. So we can assume that we have $a,b$ for which
	this cannot be done for any $\varepsilon>0$ without losing the optimality of $A$.
	Then, there is a strategy $B\in\mathcal{S}$
	such that $a\notin B, b\in B$ and with cost equal to the cost of $A$.

	Now, there must be a third priceable resource $c\in A\setminus B$ as well, such that $p_c>0$. If not, setting $p_a=0$ would increase the profit of the leader by selling $B$, contradicting the optimality for the leader. By the same
	argument as before, there is a strategy $C$ with $a\notin C, c\in C$ and with cost
	equal to the cost of $A$ and $B$. Furthermore, we can assume that $b\notin C$,
	because otherwise we could repeat the argument for a resource $d\in A\setminus C$.
\end{proof}

\begin{proposition} \label{pro:suf}
If there are $a,b,c\in E$ and $A,B,C\in\mathcal{S}$ such that $a\in A\setminus(B\cup C)$, $b\in (A\cap B)\setminus C$, and $c\in (A\cap C)\setminus B$, and there are no $S\in\mathcal{S}$ with $S\subseteq A\cup B\cup C$, then $\left(E, \mathcal{S}\right)$ admits the negative price paradox.
\end{proposition}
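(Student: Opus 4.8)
The plan is to realize the paradox by embedding an abstract copy of the Braess configuration of Example~\ref{ex:BraessGraph}, letting $A$ play the role of the ``bundle'' strategy $P_1$ and $B,C$ the two ``detour'' strategies $P_2,P_3$. Concretely, I would build a model $\mathcal{M}=(E,(c_e)_{e\in E},E_p,\mathcal{S},R)$ with $E_p=\{a,b,c\}$ and $R=3$, choosing the fixed costs so that, at zero prices, $A$ has cost $0$ while $B$ and $C$ each have cost $1$, and every strategy that meets $E\setminus(A\cup B\cup C)$ is priced out of reach. The hypothesis that $A,B,C$ are the only strategies contained in $A\cup B\cup C$ then forces the follower to always choose among $A$, $B$, $C$ (or $\emptyset$), exactly as in the three-path instance: any other $S\in\mathcal{S}$ must contain a resource of cost $>R$ and is therefore unaffordable.

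For the cost assignment, I would set $c_e=0$ for $e\in\{a,b,c\}$ and for every other resource of $A$, and $c_e=R+1$ for every $e\notin A\cup B\cup C$; the only delicate part is distributing non-negative costs over $(B\cup C)\setminus A$ so that $\sum_{e\in B\setminus A}c_e=1=\sum_{e\in C\setminus A}c_e$. Writing $P=B\setminus(A\cup C)$, $Q=C\setminus(A\cup B)$ and $M=(B\cap C)\setminus A$, I would place the unit of cost on one resource of $P$ and one of $Q$ when both are nonempty, and otherwise on a single resource of $M$. Feasibility here is where the clutter property enters: since $a\in A\setminus(B\cup C)$ we have $A\neq B$ and $A\neq C$, so the clutter property yields $B\not\subseteq A$ and $C\not\subseteq A$, whence $B\setminus A$ and $C\setminus A$ are nonempty, which guarantees $M\neq\emptyset$ precisely in the cases $P=\emptyset$ or $Q=\emptyset$. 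Using $a\notin B\cup C$, $b\notin C$ and $c\notin B$, the effective costs of the three relevant strategies then become $c_A=p_a+p_b+p_c$, $c_B=p_b+1$ and $c_C=p_c+1$.

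The two computations are then a direct transcription of Example~\ref{ex:BraessGraph}. For non-negative prices, inducing $A$ while keeping it affordable needs $p_a+p_c\le 1$ (from $c_A\le c_B$) and $p_a+p_b\le 1$ (from $c_A\le c_C$); summing gives profit $p_a+p_b+p_c\le 2-p_a\le 2$, while inducing $B$ or $C$ yields profit $p_b\le R-1=2$ or $p_c\le 2$, so $\pi(p^*_+)\le 2$. For unrestricted prices, the vector $(p_a,p_b,p_c)=(-3,3,3)$ gives $c_A=3=R$ and $c_B=c_C=4>R$, so the follower takes $A$ and the leader collects $\pi(p^*)\ge 3$. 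Hence $PoP(\mathcal{M})\ge 3/2>1$, which is exactly what ``admits the negative price paradox'' requires.

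I expect the only genuine obstacle to be the cost-distribution step: verifying that the two target sums can always be met simultaneously when $B$ and $C$ overlap outside $A$, and checking that the clutter property is exactly what rules out the degenerate cases. Everything else is routine, and the ``no other $S\subseteq A\cup B\cup C$'' hypothesis plays the clean role of preventing a cheaper strategy from sneaking in and collapsing the leader's profit to zero. Note also the pleasant symmetry with Proposition~\ref{pro:nec}: the necessary condition there produces exactly the triple $(a,b,c)$ and $(A,B,C)$ that this construction consumes.
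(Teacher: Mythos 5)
Your construction is correct and is essentially the paper's own proof: the same priceable set $E_p=\{a,b,c\}$ with $R=3$, a unit of fixed cost distributed so that $B\setminus A$ and $C\setminus A$ each sum to $1$ (with the same case split on whether $B$ and $C$ overlap outside $A$), prohibitive cost outside $A\cup B\cup C$, and the Braess-style comparison of the three affordable strategies. The only cosmetic differences are that you bound $\pi(p_+^*)\le 2$ explicitly where the paper argues that a profit of $3$ is unattainable with non-negative prices, and that you use the price vector $(-3,3,3)$ where the paper uses $(-1,2,2)$.
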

\begin{proof}
Assume there are $a,b,c\in E$ and $A,B,C\in\mathcal{S}$ such that $a\in A\setminus(B\cup C)$, $b\in (A\cap B)\setminus C$, and $c\in (A\cap C)\setminus B$, and there are no $S\in\mathcal{S}$ with $S\subseteq A\cup B\cup C$. We define a model $\mathcal{M}= \left(E, (c_e)_{e\in E}, E_p, \mathcal{S},R\right)$ such that $PoP(\mathcal{M})>1$.

Assume that $E_p=\{a,b,c\}$ and $c_e=0$ for all $e\in E_p$. Since $\mathcal{S}$ is a clutter, there exists a $d\in B\setminus A$ and $d'\in C\setminus A$ with $d$ not necessarily different from $d'$. If $(B\cap C)\setminus A\neq\emptyset$, assume $c_d=1$ for some $d\in B\cap C$, and $c_e=0$ for all $e\in (A\cup B\cup C)\setminus\{d\}$. If $(B\cap C)\setminus A=\emptyset$, assume $c_d=1$ for some $d\in B\setminus A$, $c_{d'}=1$ for some $d'\in C\setminus A$, and $c_e=0$ for all $e\in (A\cup B\cup C)\setminus\{d,d'\}$. For all $f\in E\setminus (A\cup B\cup C)$, assume $c_f=+\infty$. Moreover, assume that $R=3$.

First, we consider the maximum profit by assuming that prices are non-negative. The only way the leader can make a profit of $3$ is by selling set $A$. All other sets have a fixed cost of at least $1$. If the leader wants to sell $A$, the costs of sets $B$ and $C$ must be at least $3$ and thus, $p_b\geq 2$ and $p_c\geq 2$. However, this implies that the costs of $A$ are at least $4$ and thus the follower does not buy any set.

Second, we consider the maximum profit by allowing for negative prices. Setting $p_a=-1$ and $p_b=p_c=2$ yields a profit of $3$ by selling set $A$. Notice that sets $B$ and $C$ have a cost of $3$, and all other sets a cost of at least $3$. Hence $PoP(\mathcal{M})>1.$
\end{proof}

\section{Discussion}
We have shown the impact of negative prices on the profit of the leader in Stackelberg network pricing games. The use of negative prices explains, for example, why flight carriers are willing to give discounts on connecting flights. Our main result shows that the loss in profit due to assuming positive prices can be arbitrarily large, even for a single follower. Moreover, we showed that series-parallel graphs with a single follower are exactly those graphs that are immune to this phenomenon.

Several natural questions remain open. One important open question is whether the lower bound on the price of positivity obtained in Theorem \ref{thm:mul} is assymptotically tight in the number of priceable edges. From Briest et al.~\cite{BrHoKr12}, we know that a quadratic upper bound applies, whereas the lower bound only grows linearly. A second open question is whether the necessary condition in Proposition \ref{pro:nec} is also a sufficient condition. We expect that there exists a simple characterization of all structures that are immune to
the negative price paradox, when having a single follower.

A potential generalization for our model is to allow for congestion externalities. A common model for this setting is a non-atomic congestion game with the Wardrop equilibrium \cite{Wa52} as a solution concept. This would also open up research towards models with multiple leaders. The last example shows that when having congestion externalities, the price of positivity can already be larger than one in series-parallel graphs. One immediate question motivated by the examples we have found is how the price of positivity is related to the size of the network when there are congestion externalities. 

\begin{example}
  Consider the network of Figure \ref{fig:se2} and assume $\bar k=1$ with $(s^1,t^1,d^1)=(s,t,1)$. For each $p\in\mathbb{R}_+^2$, assume that the follows choose a Wardrop flow \cite{Wa52} with respect to $c_e(x)+p_e$.
\begin{figure}[H]
\centering
\begin{tikzpicture}[scale=.8,->,shorten >=1pt,auto,node distance=2cm,
  thick,main node/.style={circle,fill=blue!20,draw,minimum size=20pt,font=\sffamily\Large\bfseries},source node/.style={circle,fill=green!20,draw,minimum size=15pt,font=\sffamily\Large\bfseries},dest node/.style={circle,fill=red!20,draw,minimum size=15pt,font=\sffamily\Large\bfseries}]

  \node[source node] (1) at (0,0) {$s$};
  \node[main node] (2) at (3,0) {};
  \node[main node] (3) at (6,0) {};
  \node[dest node] (4) at (9,0){$t$};
   \draw (1) to node[above] {$1/2$} (2);
   \draw[ultra thick] (2) to node[above] {$p_1$} (3);
   \draw (1) to [bend right = 30] node[below] {$x$} (3);
   \draw (1) to [bend right = 45] node[below] {$1$} (4);
   \draw[ultra thick] (3) to node[above] {$p_2$} (4);
\end{tikzpicture}
\caption{A single-commodity series-parallel graph.} \label{fig:se2}
\end{figure}
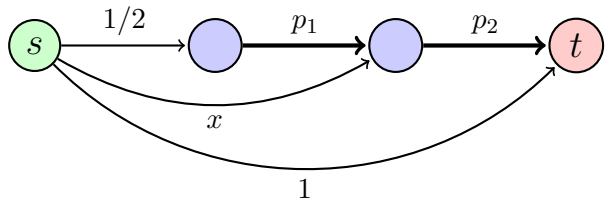

Then the price vector $p^*=(-1/4,3/4)$ yields a profit of $9/16$, whereas $p_+^*=(0,1/2)$ yields a profit of $1/2$.
\end{example}

\section*{Acknowledgments}
We would like to thank the participants of the AGCO-seminar, Jos\'{e} Correa, Britta Peis and Oliver Schaudt for helpful discussions. This work was partially supported by the Millennium Nucleus Information and Coordination in Networks, ICM/FIC RC130003, by the Millennium Institute for
Research in Market Imperfections and Public Policy, MIPP, IS130002, and by the Chilean Science Council under CONICYT-PFCHA/Doctorado Nacional/2018-21180347.

\bibliographystyle{splncs03}

\begin{thebibliography}{10}
\providecommand{\url}[1]{\texttt{#1}}
\providecommand{\urlprefix}{URL }

\bibitem{AcOz04}
Acemoglu, D., Ozdaglar, A.: Flow control, routing, and performance from service provider viewpoint. LIDS report 74 (2004)

\bibitem{AcOz07}
Acemoglu, D., Ozdaglar, A.: Competition and efficiency in congested markets. Mathematics of Operations Research 32 (1), 1--31 (2007)

\bibitem{BaBlMa08}
Balcan, M., Blum, A., Mansour, Y.: Item Pricing for Revenue Maximization. Proceedings of the 9th ACM conference on Electronic commerce, 50--59 (2008)

\bibitem{BaLiNi15}
Basu, S., Lianeas, T., Nikolova, E.: New Complexity Results and Algorithms for the Minimum Tollbooth Problem. Proceedings of the 11th International Conference on Web and Internet Economics, 89--103 (2015)

\bibitem{BiGuPrWi08}
Bil\`{o}, D., Gual\`{a}, L., Proietti, G., Widmayer, P.: Computational aspects of a 2-player Stackelberg shortest paths tree game. Proceedings of the 4th International Workshop on Internet and Network Economic, 251--262 (2008)

\bibitem{BeMcWi56}
Beckmann, M., McGuire, C., Winsten,C.: Studies in the Economics of Transportation (1956)

\bibitem{Br68}
Braess, D.: \"{U}ber ein Paradoxon aus der Verkehrsplanung. Unternehmensforschung 12 (1), 258--268 (1968)

\bibitem{BrChKhLaNa10}
Briest, P., Chalermsook, P., Khanna, S., Laekhanukit, B., Nanongkai, D.: Improved hardness of approximation for Stackelberg shortest-path pricing. Proceedings of the 6th International Workshop on Internet and Network Economics, 444--454 (2010)

\bibitem{BrHoKr12}
Briest, P., Hoefer, M., Krysta, P.: Stackelberg Network Pricing Games. Algorithmica 62 (3-4),  733--753 (2012)

\bibitem{BoKrSc17}
B\"{o}hnlein, T., Kratsch, S., Schaudt, O.: Revenue maximization in Stackelberg Pricing Games: Beyond the combinatorial setting. Proceedings of the 44th International Colloquium on Automata, Languages, and Programming, 46:1--46:13 (2017)

\bibitem{Ca12}
Cabello, S.: Stackelberg Shortest Path Tree Game, Revisited. arXiv preprint arXiv:1207.2317 (2012)

\bibitem{CaDeFiJoLaNeWe11}
Cardinal, J., Demaine, E., Fiorini, S., Joret, G., Langerman, S., Newman, I., Weimann, O.: The Stackelberg minimum spanning tree game. Algorithmica 59 (2),  129--144 (2011)

\bibitem{CeGoSa16}
Cenciarelli, P., Gorla, D., Salvo, I.: Graph theoretic investigations on inefficiencies in network models. arXiv preprint arXiv:1603.01983 (2016)

\bibitem{ChDiHu15}
Chen, X., Diao, Z., Hu, X.: Excluding Braess’s paradox in nonatomic selfish routing. Proceeding of the 8th International Symposium on Algorithmic Game Theory, 219--230 (2015)

\bibitem{CoDoRo03}
Cole, R., Dodis, Y., Roughgarden, T.: Pricing network edges for heterogeneous selfish users. Proceedings of the 35th ACM symposium on Theory of Computing, 521--530 (2003)

\bibitem{CoGuLiNiSc18}
Correa, J., Guzm\'{a}n, C., Lianeas, T., Nikolova, E., Schr\"{o}der, M.: Network Pricing: How to Induce Optimal Flows Under Strategic Link Operators. Proceedings of the 19th ACM Conference on Economics and Computation, 375--392 (2018)


\bibitem{DaSp69}
Dafermos, S., Sparrow, F.: The traffic assignment problem for a general network. Journal of Research of the National Bureau of Standards B 73 (2), 91-118 (1969)

\bibitem{Ep92}
Eppstein, D.: Parallel recognition of series-parallel graphs. Information and Computation 98 (1),  41--55 (1992)

\bibitem{FlJaMa04}
Fleischer, L., Jain, K., Mahdian, M.: Tolls for heterogeneous selfish users in multicommodity networks and generalized congestion games. Proceedings of the 45th Annual IEEE Symposium on Foundations of Computer Science, 277--285 (2004)

\bibitem{FuGoHaPeZe17}
Fujishige, S., Goemans, M., Harks, T., Peis, B., Zenklusen, R.: Matroids are immune to Braess’ paradox. Mathematics of Operations Research 42 (3), 745--761 (2017)

\bibitem{HaScSi08}
Harks, T., Sch\"{a}fer, G., Sieg, M.: Computing flow-inducing network tolls. Technical Report 36-2008, Technische Universit\"{a}t Berlin (2008)

\bibitem{HaScVe18}
Harks, T., Schr\"{o}der, M., Vermeulen, D.:  Toll Caps in Privatized Road Networks. European Journal of Operational Research 276 (3), 947-956 (2019)

\bibitem{HaNi17}
Hart, S., Nisan, N.: Approximate revenue maximization with multiple items. Journal of Economic Theory 172, 313--347 (2017)

\bibitem{HaTaWe07}
Hayrapetyan, A., Tardos, {\'E}., Wexler, T.: A network pricing game for selfish traffic. Distributed Computing 19 (4), 255-266 (2007)

\bibitem{HeRa98}
Hearn, D., Ramana, M.: Solving congestion toll pricing models. Equilibrium and advanced transportation modelling, 109--124 (1998)

\bibitem{HuOzAc06}
Huang, X., Ozdaglar, A., Acemoglu, D.: Efficiency and Braess' Paradox under pricing in general networks. IEEE Journal on Selected Areas in Communications 24 (5), 977-991 (2006)

\bibitem{JaWeRo10}
Johari, R., Weintraub, G., Van Roy, B.: Investment and market structure in industries with congestion. Operations Research 58 (5), 1303--1317 (2010)

\bibitem{Jo11}
Joret, G.: Stackelberg network pricing is hard to approximate. Networks 57 (2), 117--120 (2011)

\bibitem{KaKo04}
Karakostas, G., Kolliopoulos, S.: Edge pricing of multicommodity networks for heterogeneous selfish users. Proceedings of the 45th Annual IEEE Symposium on Foundations of Computer Science, 268--276 (2004)

\bibitem{LaMaSa98}
Labb\'{e}, M., Marcotte, P., Savard, G.: A bilevel model of taxation and its application to optimal highway pricing. Management Science 44,  1608--1622 (1998)

\bibitem{LiYa13}
Li, X., Yao, A.: On revenue maximization for selling multiple independently distributed items. Proceedings of the National Academy of Sciences 110 (28), 11232--11237 (2013)

\bibitem{Mi06}
Milchtaich, I.: Network topology and the efficiency of equilibrium. Games Economic Behavior 57 (2), 321--346 (2006)

\bibitem{Ox92}
Oxley, J.: Matroid theory (1992)

\bibitem{Oz08}
Ozdaglar, A.: Price competition with elastic traffic. Networks: An International Journal 52 (3),141--155 (2008)


\bibitem{RoSaMa05}
Roch, S., Savard, G., Marcotte, P.: An approximation algorithm for Stackelberg network pricing. Networks: An International Journal 46 (1), 57-67 (2005)

\bibitem{Ro06}
Roughgarden, T.: On the severity of Braess’s Paradox: designing networks for selfish users is hard. Journal of Computer and System Sciences 72 (5), 922--953 (2006)

  
\bibitem{Sc03}
Schrijver, A.: Combinatorial Optimization: Polyhedra and Efficiency (2003)

\bibitem{Ho08}
Van Hoesel, S.: An overview of Stackelberg pricing in networks. European Journal of Operational Research 189 (3), 1393--1402 (2008)

\bibitem{St34}
Von Stackelberg, H.: Marktform und Gleichgewicht (1934)

\bibitem{YaHu04}
Yang, H., Huang, H.: The multi-class, multi-criteria traffic network equilibrium and systems optimum problem. Transportation Research Part B: Methodological 38 (1), 1--15 (2004)

\bibitem{Wa52}
Wardrop, J.: Some theoretical aspects of road traffic research. Proceedings of
  the Institute of Civil Engineers  1 (3),  325--362 (1952)

\bibitem{We10}  
Welsh, D.: Matroid theory (2010)

\end{thebibliography}

\end{document}